\DeclareSymbolFontAlphabet{\amsmathbb}{AMSb}%
\definecolor{cblue}{rgb}{0.16, 0.32, 0.75}
\definecolor{cred}{rgb}{0.7, 0.11, 0.11}
\def\<{\langle}
\def\>{\rangle}
\newtheorem{theorem}{Theorem}
\newtheorem{lemma}{Lemma}
\newcommand{\tr}{\mathop{\mathrm{Tr}}\nolimits}
\renewcommand{\i}{\mathrm{i}}
\DeclareMathAlphabet\mathbfcal{OMS}{cmsy}{b}{n}
\begin{document}	
	
\title{\bf The quantum non-Markovianity for a special class of generalized Weyl channel}

\author[$\hspace{0cm}$]{Wen Xu$^{1,2}$\footnote{wenxu23@foxmail.com}, Mao-Sheng Li$^1$\footnote{li.maosheng.math@gmail.com}, Bo Li$^{3}$\footnote{libobeijing2008@163.com}, Gui-Mei Jiao$^{4}$, Zhu-Jun Zheng$^{1,2}$ \footnote{zhengzj@scut.edu.cn}}

\affil[$1$]{\small School of Mathematics, South China University of Technology, Guangzhou 510640, China}

\affil[$2$] {\small Laboratory of Quantum Science and Engineering, South China University of Technology, Guangzhou 510641, China}

\affil[$3$]{\small School of Computer and Computing Science, Hangzhou City University,
Hangzhou 310015, China}

\affil[$4$]{\small School of Statistics and Mathematics, Lanzhou University, Lanzhou 730000, China}
\date{}
\maketitle
\vspace{-0.5cm}

\noindent {\bf Abstract} {\small }
A quantum channel is usually represented as a sum of Kraus operators. The recent study [\href{https://journals.aps.org/pra/abstract/10.1103/PhysRevA.98.032328}
{Phys. Rev. A \textbf{98}, 032328 (2018)}]
has shown that applying a perturbation to the Kraus operators in qubit Pauli channels, the dynamical maps exhibit interesting properties, such as non-Markovianity, singularity. This has sparked our interest in studying the properties of other quantum channels. In this work, we study a special class of generalized Weyl channel where the Kraus operators are proportional to the Weyl diagonal matrices and the rest are vanishing. We use the Choi matrix of intermediate map to study quantum non-Markovianity. The crossover point of the eigenvalues of Choi matrix is a singularity of the decoherence rates in the canonical form of the master equation. Moreover, we identify the non-Markovianity based on the methods of CP divisibility and distinguishability. We also quantify the non-Markovianity in terms of the Hall-Cresser-Li-Andersson (HCLA) measure and the Breuer-Laine-Piilo (BLP) measure, respectively. In particular, we choose mutually unbiased bases as a pair of orthogonal initial states to quantify the non-Markovianity based on the BLP measure.

\noindent {\bf Keywords}: {\small } Quantum non-Markovianity, Generalized Weyl channels, Intermediate map

\section{Introduction}\label{sec:1}
In the realm of quantum physics, the isolation of a quantum system from its environment is often an idealized scenario. However, such complete isolation is practically unattainable in reality. The interaction between a quantum system and its surrounding environment is inevitable and pervasive, giving rise to a multitude of complex phenomena, such as dissipation and decoherence \cite{Bre-book2007,Wei2000,Riv2014,BLP-V-2016,Veg2017,Li2018,Chr2022}. Environmental noise further exacerbates these effects and thus disrupts the dynamical evolution of quantum processes. The dynamics of open quantum systems are often categorized into two distinct processes: Markovian and non-Markovian, depending on the absence or presence of memory effects. Investigating the characteristic of (non-)Markovian processes is a fundamental and crucial topic that enhances our knowledge of quantum information processing, quantum computation, and quantum communication \cite{Pre2018}.

To study the dynamics of open quantum systems, researchers often employ a mathematical framework based on completely positive and trace-preserving (CPTP) dynamical maps $\mathcal{E}(t)$, which can be used to describe the evolution of the system from an initial state $\rho(0)$ to a final state $\rho(t)$, i.e., $\rho(t)=\mathcal{E}(t)[\rho(0)]$. In many cases, such CPTP linear maps are regarded as quantum channels, which can be expressed as a sum of Kraus operators \cite{Choi-1975}. Quantum channels involve a lot of interesting researches on Markov,  as demonstrated in \cite{Souissi-2021,Wilde-2015,Souissi-2020,Souissi-2024}.

A universally accepted definition for quantum (non-)Markovian dynamics remains elusive \cite{Riv2014,BLP-V-2016,Veg2017,Li2018,Chr2022}. However, there are many approaches to study the quantum non-Markovianity, such as those based on the Lindblad master equations \cite{Linblad1976,GKS1976}, CP divisibility \cite{Wol2008,Riv-2010,Hou2011,Chr2011,Hall-2014}, distinguishability \cite{BLP-2009}, quantum Fisher information flow \cite{Lu-2010,SL2015}, quantum correlation \cite{LFS2012,Jia-Luo2013}, etc. In this paper, we adopt the two methods of CP divisibility and distinguishability to study the quantum non-Markovianity. A dynamical map $\mathcal{E}(t)$ is called CP divisible if and only if it can be decomposed as
\begin{equation}\label{decomp-divisible}
\mathcal{E}(t)=\mathcal{E}(t,s)\;\mathcal{E}(s),
\end{equation}
and the intermediate map $\mathcal{E}(t,s)$ is also CP for any $ 0\leq s\leq t$ \cite{Riv-2010,Chr2011}. The trace distance between a pair of initial states of $\rho_1$ and $\rho_2$ is defined by
\begin{equation}\label{trace-distance}
D(\rho_1,\rho_2)=\frac{1}{2}\tr{|\rho_1-\rho_2|},
\end{equation}
where $|A|=\sqrt{A^\dag A}.$ It satisfies $0\leq D(\rho_1,\rho_2)\leq 1$, and $D(\rho_1,\rho_2)=1$ if and only if $\rho_1$ and $\rho_2$ are orthogonal. The trace distance has a clear physical interpretation in terms of the distinguishability between two quantum states \cite{BLP-2009,Nie-2000}. Moreover, it is non-increasing under any CPTP
map $\mathcal{E}(t)$, i.e.,
$$
D[\mathcal{E}(t)(\rho_1),\mathcal{E}(t)(\rho_2)]\leq D(\rho_1,\rho_2).
$$
This means that a trace preserving quantum operation can never increase the distinguishability of any two quantum states.

The two methodologies yield some criteria and measures for identifying and quantifying the non-Markovianity. Specifically, if the dynamical map $\mathcal{E}(t)$ is CP indivisibility, then it is non-Markovian. In this sense, although the dynamical map $\mathcal{E}(t)$ satisfies the decomposition law \eqref{decomp-divisible}, the associated intermediate map $\mathcal{E}(t,s)$ may not be CP. By
Choi-Jamio{\l}kowski isomorphism \cite{Choi-1975}, we know that the Choi matrix of the intermediate map $\mathcal{E}(t,s)$ is not positive semidefinite, i.e., it has at least one negative eigenvalue, which indicating non-Markovian behavior. Moreover,  Rivas-Hulga-Plenio (RHP) proposed the non-Markovianity measure in terms of not completely positive (NCP) intermediate map \cite{Riv-2010}. From the perspective of Lindblad master equation, Hall-Creser-Li-Anderson (HCLA) proposed the non-Markovianity measure in terms of time-dependent negative decoherence rates \cite{Hall-2014}.
Additionally, the Breuer-Laine-Piilo (BLP) measure \cite{BLP-2009} quantifies the non-Markovianity by tracking the increase in distinguishability between a pair of initial states, which reflects information backflow from the environment to the system.

Recently, the non-Markovianity of qubit Pauli channels has attracted attention in \cite{Shr-2018,Utagi-2020,Abu-2024}. In particular, the non-Markovianity of dephasing and depolarizing channels is investigated by introducing a parameter perturbation to the corresponding Kraus operators \cite{Shr-2018}. This parameter serves as a key indicator of non-Markovian behavior. Perturbation theory has potential physical significance for the study of dynamical maps, such as quantum dynamical semigroups \cite{Hol-2018,Sie-2017}. Moreover, perturbation studies may be extended to investigate how informational characteristics change in various types of quantum channels \cite{Ser-2019,Amo-2020}.

With the rapid development of quantum computing and quantum communication, it has become increasingly urgent to investigate the properties of quantum channels in high-dimensional under the influence of noise \cite{Dutta-2023}. As one of these channels, generalized Weyl channels \cite{Wud2015,Xu-2024} serve as a valuable tool for investigating the non-Markovianity and singularities \cite{Hou2011,Uta2021,Jag2022} of quantum channels.

This paper is organized as follows. In Sec. \ref{sec:2}, we review
the notion of unitary Weyl operators and two matrix representations of quantum channels. In Sec. \ref{sec:3}, we consider a special class of generalized Weyl channel where the Kraus operators are proportional to the Weyl diagonal matrices and the rest are vanishing. Moreover, we obtain the expression of the Choi matrix for intermediate map. We also present a simple case of $d=3$ to discuss the non-Markovianity, one can find that the singularity of the intermediate map occurs at the crossover point among its $d$ eigenvalues. In Sec. \ref{sec:4}, the non-Markovianity is quantified by the HCLA measure, which corresponds to a negative decoherence rate in the canonical master equation. In Sec. \ref{sec:5}, we demonstrate that the singularity of intermediate map is not pathological in the sense of density operator, and the solution of intermediate map is still regular \cite{Chr2010}. In Sec. \ref{sec:6}, we choose mutually unbiased bases as a pair of orthogonal initial states to quantify the non-Markovianity based on the BLP measure. Finally, we conclude in Sec. \ref{sec:7} with discussions on the future research directions.

\section{Weyl operators and two matrix representations of quantum channels}\label{sec:2}
We first recall some definitions and notations that need to be used in this work.

Let $U_{kl}\in\mathbb{C}^{d\times d}$, $k,l=0,1,\ldots,d-1$, be the set of unitary Weyl operators
\begin{equation}\label{Weyl-operator}
U_{kl}=\sum\limits_{m=0}^{d-1}\omega_d^{km}|m\rangle\langle m+l|,~\omega_d=e^{\frac{2\pi \mathrm{i}}{d}},
\end{equation}
where the addition indices in Eq. \eqref{Weyl-operator} are taken over modulo $d$ \cite{Chr2013,Wud2015}. They satisfy
\begin{equation}
U_{kl}U_{rs}=\omega_d^{lr}U_{k+r,l+s},\ U_{kl}^{\dag}=\omega_d^{kl}U_{-k,-l},~r,s=0,1,\ldots,d-1.
\end{equation}

For simplicity, we denote the double subscript of the Weyl operators as a single subscript, i.e., $a=dk+l.$ One has $U_0=\mathbbm{1}_d, \tr(U_{a}U_{b})=d\delta_{ab}\;(
a,b=0,1,\ldots,d^{2}-1)$, where $\mathbbm{1}_d$ is the identity operator in a $d$-dimensional Hilbert space. The Weyl matrices for the case of $d=3$ are given in Appendix A.

A quantum channel $\mathcal{E}$ has the Kraus representation,
\begin{equation}\label{Kraus}
\mathcal{E}(X)=\sum\limits_{a=0}^{d^2-1}K_a
X K_a^{\dag}.
\end{equation}
where $K_a$ are the Kraus operators, which satisfy the completeness condition
\begin{equation}\label{comp-condition}
\sum_{a=0}^{d^2-1}K_a^{\dag}K_a=\mathbbm{1}_d.
\end{equation}

Next, we review two matrix representations of the quantum channel $\mathcal{E}$. Fixing an orthonormal basis $\{|i\rangle,i=0,1,\ldots,d-1\}$ in a Hilbert space, one can define the Choi matrix
\begin{equation}\label{Choi-marix} C=(\mathcal{E}\otimes\mathbbm{1}_d)|\phi^+\rangle\langle\phi^+|=
\sum_{i,j=0}^{d-1}\mathcal{E}(|i\rangle\langle j|)\otimes|i\rangle\langle j|,
\end{equation}
where $|\phi^+\rangle=\sum_{i=0}^{d-1}|ii\rangle$ is the unnormalized maximally entangled state, and the matrix elements of Hermitian Choi matrix $C \in \mathbb{C}^{d^2\times d^2}$ are given by
$C_{ij,kl}=\langle i\otimes j| C|k\otimes l\rangle=\langle i|\mathcal{E}(|j\rangle\langle l|)|k\rangle.$ By
Choi-Jamio{\l}kowski isomorphism, the quantum channel $\mathcal{E}$ is completely positive (CP) if and only if the Choi matrix $C$ is positive semidefinite \cite{Choi-1975}.

Any matrix $X\in\mathbb{C}^{d\times d}$ can be mapped to a vector $|X\rangle\rangle\in\mathbb{C}^{d}\otimes\mathbb{C}^{d}$ as follows:
\begin{equation}\label{vec}
|X\rangle\rangle=\sum_{i,j=0}^{d-1} X_{ij}|ij\rangle,
\end{equation}
where $X_{ij}=\langle i|X|j\rangle.$ The form of Eq. \eqref{vec} is called the vectorization of $X$ \cite{Zyc-2004,Jag-2018,Tar-2021}. One can define a superoperator $\widehat{\mathcal{E}}\in\mathbb{C}^{d^2\times d^2}$ via
\begin{equation}\label{superoperator}
\widehat{\mathcal{E}}|X\rangle\rangle=|\mathcal{E}(X)\rangle\rangle,
\end{equation}
and the corresponding matrix elements of $\widehat{\mathcal{E}}$ are given by $\widehat{\mathcal{E}}_{ij,kl}=\langle ij|\widehat{\mathcal{E}}|kl\rangle$ \cite{Tar-2021}.

The Choi matrix and superoperator representations of $\mathcal{E}$ are related by the reshuffling operation \cite{Zyc-2004,Tar-2021}
\begin{equation}\label{reshuffling}
 C=\widehat{\mathcal{E}}^{\mathcal{R}},~~
 C_{ij,kl}=\widehat{\mathcal{E}}_{ik,jl}.
\end{equation}
By the Kraus representation Eq. \eqref{Kraus} of the quantum channel $\mathcal{E}$, the Choi matrix Eq. \eqref{Choi-marix} can be rewritten as
\begin{equation}\label{Choi-representation}
 C =\sum_{a=0}^{d^2-1}|K_a\rangle\rangle\langle\langle K_a|,
\end{equation}
and the superoperator Eq. \eqref{superoperator} is given by
\begin{equation}\label{superoperator-representation}
\widehat{\mathcal{E}}=\sum_{a=0}^{d^2-1}K_a\otimes \overline{K}_a,
\end{equation}
where $\overline{K}_a$ is the conjugate of $K_a$. The proof of Eqs. \eqref{Choi-representation} and \eqref{superoperator-representation} are given in Appendix B.

\section{A special class of generalized Weyl channel and the Choi matrix for the intermediate map}
\label{sec:3}   

\subsection{A special class of generalized Weyl channel}
In this subsection, we focus on a special class of generalized Weyl channel. Unlike Ref. \cite{Abu-2024}, we introduce a new technical approach (i.e., the use of two matrix representations of quantum channels) to derive the expression of Choi matrix corresponding to the intermediate map. This allows us to analyze the non-Markovianity and singularity of generalized Weyl channels.

Generalized Weyl channel was introduced in Ref. \cite{Wud2015,Xu-2024}, which can be represented by a sum of Kraus operators in Eq. \eqref{Kraus}. It is natural that  these Kraus operators can be given by
\begin{equation}\label{k-parameterization}
K_0=\sqrt{1-\frac{d^2-1}{d^2}\kappa}\;\mathbbm{1}_d, ~~ K_a=\sqrt{\frac{\kappa}{d^2}}\;U_a,\; a=1,\ldots,d^2-1,
\end{equation}
where $\{1-\frac{d^2-1}{d^2}\kappa, \frac{\kappa}{d^2}, \ldots, \frac{\kappa}{d^2}\}$ is the set of probability distribution and $\kappa$ is the mixing parameter of quantum channel.

Motivated by the method of Ref. \cite{Shr-2018,Abu-2024}, one can introduce a time-like parameter $p$ to generalize the form of  Eq. \eqref{k-parameterization}, which can be written as 
\begin{equation}\label{p-parameterization}
K_0= \sqrt{\left[1+\Lambda_0(p)\right]\left(1-\frac{d^2-1}{d^2}p\right)}\;\mathbbm{1}_d, ~~ K_a=\sqrt{\left[1+\Lambda_a(p)\right]\;\frac{p}{d^2}}\;U_a,\;a=1,\ldots,d^2-1.
\end{equation}
Here $\Lambda_a(p)$ are real functions for all $a=0,1,\ldots,d^2-1$, and $p$ is the time-like parameter, which increases monotonically from $0$ to $1$. In fact, time-like refers to the parameter $p$ acting similarly to the time-dependent probability distribution function $p(t)$, which increases monotonically with time $t$, but we do not care about its detailed functional dependence \cite{Shr-2018}. In particular, when $\Lambda_a(p)=0$ for all $a=0,1,\ldots,d^2-1$ and $p$ is replaced by $\kappa$, then Eq. \eqref{p-parameterization} reduces to Eq. \eqref{k-parameterization}.

In this paper, we use the Kraus operators with the time-like parameter $p$ to study the conditions under which the generalized Weyl channel is non-Markovian. According to the above $d^2$ Kraus operators in Eq. \eqref{p-parameterization}, one can calculate the Choi matrix corresponding to the intermediate map. However, its expression is very complicated. For simplicity, we consider a special class of generalized Weyl channel. That is, we retain only the $d$ Kraus operators in Eq. \eqref{Kraus}, which have the following form
\begin{equation}\label{Weyl-Kraus}
K_{0}=\sqrt{[1+\Lambda_0(p)]\left(1-\frac{d-1}{d}p\right)}\;\mathbbm{1}_d,~~
K_{di}=\sqrt{[1+\Lambda_i(p)]\;\frac{p}{d}}\;U_{di},\;i=1,\ldots,d-1,
\end{equation}
and the other Kraus operators are vanishing. Here $U_{di}$ are Weyl diagonal matrices for all $i=1,\ldots,d-1$. Without confusion, the subscripts $di$ of $U_{di}$ and $K_{di}$ refer to the scalar products of $d$ and $i$. For the case of $d=3$, the Weyl diagonal matrices $U_3$ and $U_6$ are given in Appendix A. By the completeness condition Eq. \eqref{comp-condition}, one has
\begin{equation}
\left(1-\frac{d-1}{d}p\right)\Lambda_0(p)+\frac{p}{d}\sum_{i=1}^{d-1}\Lambda_i(p)=0.
\end{equation}
We choose $\Lambda_0(p)=-\frac{d-1}{d}\alpha p,$
$\Lambda_i(p)=\alpha\left(1-\frac{d-1}{d}p\right)$ for all $i=1,\ldots,d-1$, where $\alpha$ is a real parameter, and $p\in[0,1]$. Then we have
\begin{equation}\label{parameterization-kraus}
K_{0}=\sqrt{1-\frac{d-1}{d}\kappa(p)}\;\mathbbm{1}_d,~
K_{di}=\sqrt{\frac{\kappa(p)}{d}}\;U_{di},~i=1,\ldots,d-1,
\end{equation}
where $\kappa(p)=p\left[1+\alpha\left(1-\frac{d-1}{d}p\right)\right].$ In this case, the generalized Weyl channel $\mathcal{E}(p)$ is given by
\begin{align}\label{special-Weyl}
\mathcal{E}(p)(\rho)=&\left[1-\frac{d-1}{d}\kappa(p)\right]\rho+
\frac{\kappa(p)}{d}\sum_{i=1}^{d-1}U_{di}\rho U_{di}^\dag \nonumber\\
=&\frac{1}{d-1}\sum_{i=1}^{d-1}\left\{
\left[1-\frac{d-1}{d}\kappa(p)\right]\rho+
\frac{d-1}{d}\kappa(p)U_{di}\rho U_{di}^\dag\right\}.
\end{align}
It is essentially an average convex combination of $(d-1)$ generalized Weyl dephasing channels. Here the generalized Weyl dephasing channels are of the form $\mathcal{E}_i(\rho)=\left[1-\frac{d-1}{d}\kappa(p)\right]\rho+
\frac{d-1}{d}\kappa(p)U_{di}\rho U_{di}^\dag$ for all $i=1,\ldots,d-1$ (for more details on generalized Weyl dephasing channels, see Ref. \cite{Xu-2024}).

To ensure the generalized Weyl map $\mathcal{E}(p)$ is CP, we choose the parameter $\alpha\in[0,1]$. The introduction of the parameter $\alpha$ has a small perturbation effect on the generalized Weyl channel, which leads to the study of non-Markovian dynamics. We also regard it as a non-Markovian parameter.

\subsection{The Choi matrix for the intermediate map}\label{subsec:3.2}
Now we consider the dynamics on the time-like parameter $p$ of generalized Weyl channel $\mathcal{E}(p)$, which evolving the initial state $\rho(0)$ to the final state $\rho(p)$, i.e., $\rho(p)=\mathcal{E}(p)(\rho(0))$. For $p\in[p_*,p^*]$  with $0\leq p_*< p^*\leq 1,$ similar to Eq. \eqref{decomp-divisible}, the CP map $\mathcal{E}(p^*)$ is CP-divisible if it
can be decomposed as
\begin{equation}\label{divisible}
\mathcal{E}(p^*)=\mathcal{E}(p^*,p_*)\;\mathcal{E}(p_*),
\end{equation}
and the intermediate map $\mathcal{E}(p^*,p_*)$ is also CP for all $0\leq p_*< p^* \leq 1.$

From Eq. \eqref{divisible}, the intermediate map $\mathcal{E}(p^*,p_*)$ can be written as
\begin{equation}\label{invertible}
\mathcal{E}(p^*,p_*)=\mathcal{E}(p^*)\;\mathcal{E}^{-1}(p_*)
\end{equation}
if the map $\mathcal{E}^{-1}(p_*)$ is invertible. The Choi matrix for the intermediate map $\mathcal{E}(p^*,p_*)$ is given by
\begin{equation}
\chi(\alpha,p^*,p_*)=[\mathcal{E}(p^*,p_*)\otimes \mathbbm{1}_d]|\phi^+\rangle\langle\phi^+|,
\end{equation}
where $|\phi^+\rangle=\sum_{i=0}^{d-1}|ii\rangle$ is the unnormalized maximally entangled state. 

By Choi-Jamio{\l}kowski isomorphism \cite{Choi-1975}, the intermediate map $\mathcal{E}(p^*,p_*)$ is CP if and only if the corresponding Choi matrix $\chi(\alpha,p^*,p_*)$ is positive semidefinite. Therefore, CP divisibility of dynamical map is closely related to the Choi matrix corresponding to intermediate map. We have the the following non-Markovian criterion.

\textbf{Criterion 1:} If the Choi matrix $\chi(\alpha,p^*,p_*)$ has negative eigenvalues, then the dynamical map $\mathcal{E}(p^*)$ is non-Markovian. Otherwise, it is Markovian.

To investigate the non-Markovianity of generalized Weyl channels, we first show the expression of Choi matrix for the intermediate map $\mathcal{E}(p^*,p_*)$.

\begin{theorem}\label{Thm1}
For a special class of generalized Weyl channel $\mathcal{E}(p)$ and the intermediate map is given by Eq. \eqref{special-Weyl} and Eq. \eqref{divisible}, respectively. Then one can obtain the Choi matrix 
\begin{equation}\label{Choi-inter}
\chi(\alpha,p^*,p_*)=\left[\widehat{\mathcal{E}}(p^*)
\widehat{\mathcal{E}}^{-1}(p_*)\right]^{\mathcal{R}}
=
\begin{pmatrix}
E_{00} & \widetilde{E}_{01} & \cdots & \widetilde{E}_{0,d-1} \\
\widetilde{E}_{10} & E_{11} & \cdots & \widetilde{E}_{1,d-1} \\
\vdots & \vdots & \ddots & \vdots \\
\widetilde{E}_{d-1,0} & \widetilde{E}_{d-1,1} & \cdots & E_{d-1,d-1}
\end{pmatrix}.
\end{equation}
Here $\widehat{\mathcal{E}}(p)$ is the superoperator matrix given by Eq. \eqref{superoperator-representation}, $\mathcal{R}$ is the reshuffling operation,
$\widetilde{E}_{ij}=\frac{G(p^*)}{G(p_*)}E_{ij}
=\frac{1-\kappa(p^*)}{1-\kappa(p_*)}E_{ij}\; (i\neq j, i,j=0,1,\ldots,d-1)$, and each $E_{ij} \;(i,j=0,1,\ldots,d-1)$ is a $d\times d$ matrix whose the entry in row $(i+1)$ and column $(j+1)$ is $1$ and all the others are $0$, where the function $G(p)$ is given by
\begin{equation}\label{G-function}
G(p)=1-\kappa(p)=\frac{d-1}{d}\alpha p^2-(1+\alpha)p+1.
\end{equation}
\end{theorem}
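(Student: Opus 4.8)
The plan is to bypass the intermediate map $\mathcal{E}(p^*,p_*)$ at the level of superoperators, where inversion and composition become elementary, and only pass to the Choi picture at the very end through the reshuffling \eqref{reshuffling}. First I would write down $\widehat{\mathcal{E}}(p)=\sum_a K_a\otimes\overline{K}_a$ using \eqref{superoperator-representation} with the Kraus operators \eqref{parameterization-kraus}. The decisive structural observation is that every surviving Kraus operator is \emph{diagonal}: $K_0$ is proportional to $\mathbbm{1}_d$ and each $K_{di}$ is proportional to the Weyl diagonal matrix $U_{di}=\sum_{m}\omega_d^{im}\ketbra{m}{m}$. Consequently each tensor factor $K_a\otimes\overline{K}_a$ is diagonal in the product basis $\{|mn\rangle\}$, so $\widehat{\mathcal{E}}(p)$ is itself diagonal, with
\[
\widehat{\mathcal{E}}(p)_{mn,mn}=\sum_a (K_a)_{mm}\,\overline{(K_a)_{nn}}.
\]

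For these diagonal entries I would separate the $K_0$ contribution, which gives $1-\frac{d-1}{d}\kappa(p)$, from the Weyl contributions $\frac{\kappa(p)}{d}\sum_{i=1}^{d-1}\omega_d^{i(m-n)}$. The geometric (character) sum $\sum_{i=0}^{d-1}\omega_d^{i(m-n)}=d\,\delta_{mn}$ then collapses everything: for $m=n$ the entry equals $1$, and for $m\neq n$ it equals $1-\kappa(p)=G(p)$, with $G$ as in \eqref{G-function}. Thus $\widehat{\mathcal{E}}(p)$ is diagonal with the eigenvalue $1$ on the ``populations'' $m=n$ and the eigenvalue $G(p)$ on the ``coherences'' $m\neq n$. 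Under the invertibility hypothesis behind \eqref{invertible} (equivalently $G(p_*)\neq0$), the inverse $\widehat{\mathcal{E}}^{-1}(p_*)$ is then diagonal with entries $1$ and $1/G(p_*)$, and since both factors are diagonal in the same basis they commute, so $M:=\widehat{\mathcal{E}}(p^*)\widehat{\mathcal{E}}^{-1}(p_*)$ is diagonal with $M_{mn,mn}=1$ for $m=n$ and $M_{mn,mn}=G(p^*)/G(p_*)$ for $m\neq n$.

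Finally I would apply the reshuffling $\chi=M^{\mathcal{R}}$, i.e. $\chi_{ij,kl}=M_{ik,jl}$ from \eqref{reshuffling}. Diagonality of $M$ forces $i=j$ and $k=l$, so $\chi_{ij,kl}=\delta_{ij}\delta_{kl}\,M_{ik,ik}$. Reading $\chi$ as a $d\times d$ array of $d\times d$ blocks with the first (row) index labelling the block, the block $(i,k)$ carries a single nonzero entry $M_{ik,ik}$ located at within-block position $(i,k)$; by the definition of $E_{ik}$ this block equals $E_{ii}$ when $i=k$ and $\frac{G(p^*)}{G(p_*)}E_{ik}=\widetilde{E}_{ik}$ when $i\neq k$, which is exactly \eqref{Choi-inter}. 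I expect the only genuine subtlety to be bookkeeping: matching the reshuffled index pattern $\chi_{ij,kl}=M_{ik,jl}$ to the stated block layout (keeping track of the $\delta_{ij}\delta_{kl}$ support), together with recording the invertibility proviso $G(p_*)\neq 0$. The analytic content, namely the diagonalization of the superoperator via the Weyl character sum, is immediate once the diagonal form of the Kraus operators is exploited.
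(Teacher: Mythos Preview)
Your proposal is correct and follows essentially the same route as the paper's Appendix~C: compute $\widehat{\mathcal{E}}(p)=\sum_i K_{di}\otimes\overline{K}_{di}$ as a diagonal matrix in the product basis using the character sum $\sum_{i=0}^{d-1}\omega_d^{i(m-n)}=d\,\delta_{mn}$, invert and multiply at the superoperator level, then reshuffle. The paper writes the diagonal structure as $\widehat{\mathcal{E}}(p_*)=\mathrm{diag}(D_0,\ldots,D_{d-1})$ with diagonal blocks $D_i$ having $(i{+}1)$-th entry $1$ and the rest $1-\kappa(p_*)$, which is exactly your population/coherence split; your explicit tracking of the reshuffled indices $\chi_{ij,kl}=\delta_{ij}\delta_{kl}M_{ik,ik}$ is slightly more detailed than what the paper records but amounts to the same verification.
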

The proof of Theorem \ref{Thm1} is given in Appendix C.

{\sf Remark 1:} The Choi matrix $\chi(\alpha,p^*,p_*)$ in Eq. \eqref{Choi-inter} is composed of multiple block matrices and has only $d^2$ nonzero matrix elements. The simple cases of $d=2,3$ are given in Appendix C. Theorem \ref{Thm1} can be regarded as a generalization of Ref. \cite{Shr-2018}. Based on the special structure of the Choi matrix, we derive its eigenvalues to analyze non-Markovianity. To provide an intuitive understanding, we visualize the non-Markovianity through graphical representations (see subsection 3.3 for details). Additionally, this structure allows us to analyze the impact of noise on quantum channels (via a perturbation in parameter $\alpha$) and quantify the degree of interference by using the non-Markovianity measures such as HCLA and BLP (see sections \ref{sec:4} and \ref{sec:6}), which offering significant potential to advance research on noise resistance in quantum systems.

{\sf Remark 2:} Compared to the method used in Ref. \cite{Abu-2024}, one of our advantages is that we employ a new technical approach (i.e., the use of two matrix representations of quantum channel) to derive the expression of Choi matrix. Then we can get the analytic forms of its eigenvalues, avoiding the need for numerical calculations.

The function $G(p)$ has two real roots
\begin{equation}\label{two-roots}
\alpha_{\pm}=\frac{d}{d-1}\times\frac{1+\alpha\pm\sqrt{(1+\alpha)^2-4\alpha(d-1)/d}}{2\alpha}.
\end{equation}
Note that $\lim_{\alpha\rightarrow0^+}\alpha_-=1$ and $\alpha_-$ decreases monotonically for $\alpha\in(0,1]$, one has $0<\alpha_-\leq1.$ On the other hand, since $\alpha_{+}\alpha_{-}=\frac{d}{(d-1)\alpha}>1$ and $\alpha_{\pm}>0,$ we have $\alpha_+>1$.

The eigenvalues of the Choi matrix in Eq. \eqref{Choi-inter} are
\begin{equation}\label{eigenvalue}
\lambda_0=1+(d-1)\frac{(\alpha_{-}-p^*)(\alpha_{+}-p^*)}
{(\alpha_{-}-p_*)(\alpha_{+}-p_*)},~~
\lambda_i=1-\frac{(\alpha_{-}-p^*)(\alpha_{+}-p^*)}
{(\alpha_{-}-p_*)(\alpha_{+}-p_*)}, ~i=1,\ldots,d-1,
\end{equation}
and the other $d(d-1)$ eigenvalues are zero. This result is a generalization of Ref. \cite{Shr-2018}. When $d=2$, the eigenvalues of Eq. \eqref{eigenvalue} are coincided with Ref. \cite{Shr-2018}.

To simplify notation, we also denote the intermediate map as $\mathcal{E}^{\text{int}}(p):=\mathcal{E}(p^*, p_*), \;p\in[p_*,p^*].$ By the eigenvalues relation of Eq. \eqref{eigenvalue}, one can get the  Kraus operators for the intermediate map
\begin{equation}
K_j^{\text{int}}=\sqrt{\epsilon_j\lambda_j/d}~U_{dj},\; j=0,1,\ldots,d-1,
\end{equation}
and the other $d(d-1)$ Kraus operators of intermediate map are vanishing, where
$$
\epsilon_j=
\begin{cases}
1,& \; \text{if}~ \lambda_j>0,\\
-1,& \; \text{if}~ \lambda_j<0.
\end{cases}
$$
Hence the intermediate map is given by
\begin{equation}\label{int-map}
\mathcal{E}^{\text{int}}(p)(\rho)=\sum_{j=0}^{d-1}\epsilon_j K_j^{\text{int}}\rho
K_j^{\text{int}^\dag}=\frac{1}{d}
\sum_{j=0}^{d-1}\lambda_jU_{dj}\rho U^\dag_{dj}=
\rho+\frac{1}{d}\sum_{j=1}^{d-1}\lambda_j\left(U_{dj}\rho U^\dag_{dj}-\rho\right),
\end{equation}
and the completeness relation is $\sum_{j=0}^{d-1}\epsilon_j K_j^{\text{int}^\dag}K_j^{\text{int}}=\mathbbm{1}_d,$ where the last equality holds as $\sum_{j=0}^{d-1}\lambda_j=d$.

In the case of no perturbation, i.e., $\alpha=0,\; p_*=0$, the eigenvalues of Choi matrix in Eq. \eqref{eigenvalue} are reduced to
\begin{equation}\label{no-perturb-eig}
\lambda_0=d-(d-1)p^*,~~
\lambda_i=p^*, ~i=1,\ldots,d-1.
\end{equation}
Since these eigenvalues are nonnegative for the whole range $0\leq p^*\leq1$, then the intermediate map $\mathcal{E}(p^*,p_*)$ is CP. Hence, the generalized Weyl map $\mathcal{E}(p)$ is Markovian for $0\leq p^*\leq1$. For the case $d=3$, the eigenvalues of Eq. \eqref{no-perturb-eig} are depicted in Fig. \ref{fig1}. One finds that the crossover point of the eigenvalues lies in $p^*=1.$

On the other hand, the main work of this paper is to introduce a perturbation $\alpha\in(0,1]$ into the dynamics, which is also generally considered to be a non-Markovian parameter. In the following, we will discuss the non-Markovianity of this perturbation on the intermediate map.

\subsection{A simple case of $d=3$}
In this subsection, we investigate the non-Markovianity for a special class of generalized Weyl channel. In subsection \ref{subsec:3.2}, we have obtained the Choi matrix for the intermediate map $\mathcal{E}(p^*,p_*)$, i.e., Eq. \eqref{Choi-inter}. Next, we utilize the eigenvalues of the Choi matrix $\chi(\alpha,p^*,p_*)$ to discuss the non-Markovianity of generalized Weyl channel.

Note that the eigenvalues $\lambda_j(\alpha,p^*,p_*), j=0,1,\ldots,d-1,$ in Eq. \eqref{eigenvalue} are the functions of the variables $\alpha, p^*,$ and $p_*$; here we adopt the similar method of Ref. \cite{Shr-2018} to study the non-Markovianity of generalized Weyl channel. We fix the dimension $d$ of the system, such as the simple case of $d=3$ (the other high-dimensional cases are discussed similarly). In addition, we also fix the parameters $\alpha$ and $p_*$. Under the above conditions, these eigenvalues reduce to the functions $\lambda_j(p^*)$ that are only related to $p^*$.

Taking the simple case of $d=3$ as an example, we discuss the non-Markovianity for the special class of generalized Weyl channel. Here we fix the parameter $\alpha=0.5$ and choose two different parameters $p_*$ to investigate how these eigenvalues $\lambda_j(p^*)$ change over the range $p^*\in[p_*,1]$. For $d=3$ and $\alpha=0.5$, by Eq. \eqref{two-roots}, one has $\alpha_{-}\approx0.81, \alpha_{+}\approx3.69.$ We discuss the following two cases: $p_*<\alpha_{-}$ and $p_*>\alpha_{-}.$

\begin{enumerate}
\item [\rm (i)]
If $p_*=0.3$ and $p^*\in[0.3,1]$, the eigenvalues are reduced to
$$
\lambda_0=1+2\times\frac{(0.81-p^*)(3.69-p^*)}
{0.51\times3.39},~~
\lambda_1=\lambda_2=1-\frac{(0.81-p^*)(3.69-p^*)}
{0.51\times3.39}.
$$
From Fig. \ref{fig1}, one observes that at the point $p^*=p_*=0.3,$ one has $\lambda_0=3, \lambda_1=\lambda_2=0;$ at the point $p^*=\alpha_{-}\approx0.81$, these eigenvalues are the same, i.e., $\lambda_0=\lambda_1=\lambda_2=1$. Since all the eigenvalues $\lambda_j>0\;(j=0,1,2)$ over the range $p^*\in[0.3,1]$, then the generalized Weyl channel defined by Eq. \eqref{special-Weyl} is Markovian.

\item [\rm (ii)]
If $p_*=0.85$ and $p^*\in[0.85,1]$, the eigenvalues are reduced to
$$
\lambda_0=1+2\times\frac{(0.81-p^*)(3.69-p^*)}
{(-0.04)\times2.84},~~
\lambda_1=\lambda_2=1-\frac{(0.81-p^*)(3.69-p^*)}
{(-0.04)\times2.84}.
$$
From Fig. \ref{fig2}, one observes that the eigenvalues $\lambda_i\;(i=1,2)$ are negative in the entire range of $p^*\in(0.85,1].$ It suggests that
the intermediate map is NCP, i.e., the generalized Weyl channel defined by Eq. \eqref{special-Weyl} is non-Markovian.
\end{enumerate}

\begin{figure}[htbp]
\centering
\setlength{\abovecaptionskip}{0.05cm}
{\includegraphics[width=7cm,height=6cm]{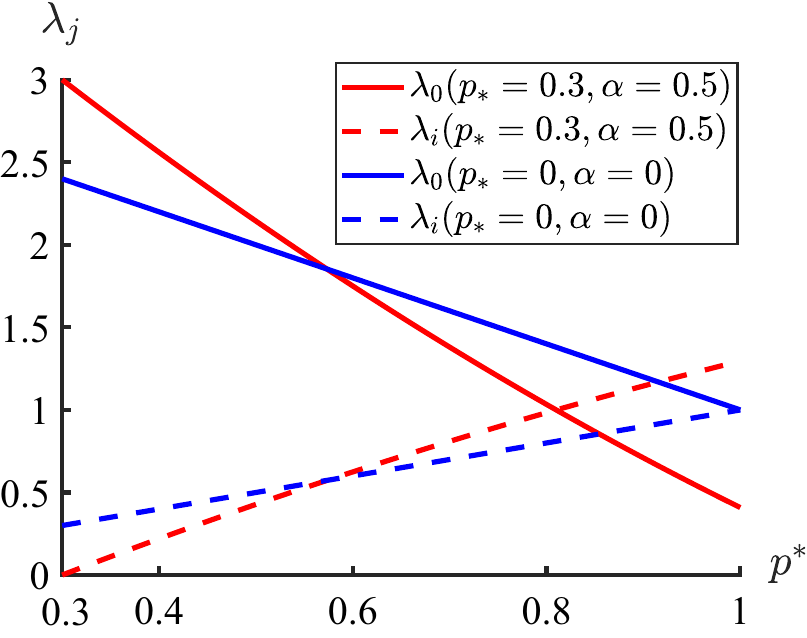}}
\caption{\footnotesize{The eigenvalues of Choi matrix Eq. \eqref{Choi-inter} of intermediate map for the case of $d=3$. The eigenvalues $\lambda_0$ (solid, blue line) and $\lambda_i, i=1,2,$ (dashed, blue line) for $p_*=0, \alpha=0$. The eigenvalues $\lambda_0$ (solid, red line) and $\lambda_i, i=1,2,$ (dashed, red line) for $p_*=0.3, \alpha=0.5$. The crossover point of the eigenvalues $\lambda_0$ and $\lambda_i$ in these two case is $1$ and $0.81$, respectively. For the two cases, the eigenvalues of intermediate map are nonnegative, so the
generalized Weyl channel $\mathcal{E}(p)$ is Markovian for the whole range $p^*\in[0.3,1]$.}}\label{fig1}
\end{figure}

\begin{figure}[htbp]
\centering
\setlength{\abovecaptionskip}{0.05cm}
{\includegraphics[width=7cm,height=6cm]{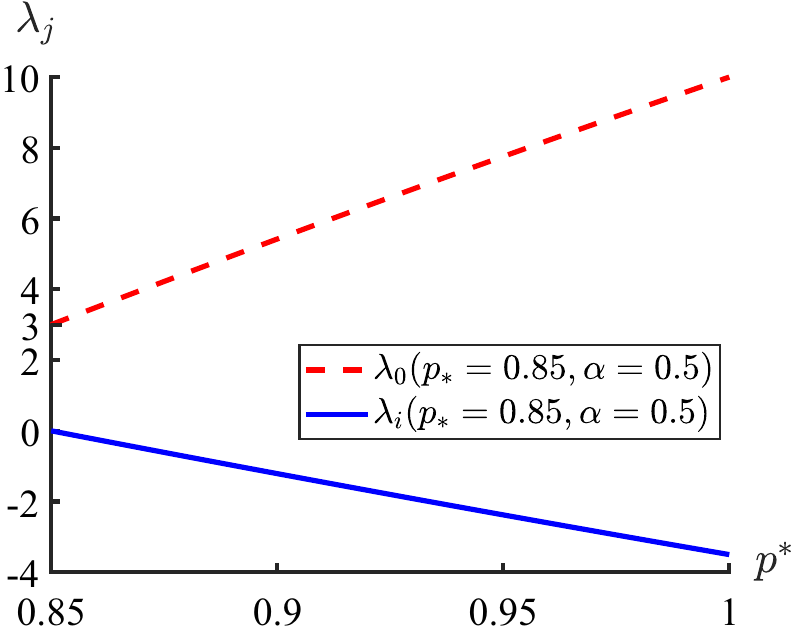}}
\caption{\footnotesize{The eigenvalues of Choi matrix Eq. \eqref{Choi-inter} of intermediate map for the case of $d=3$. The eigenvalues $\lambda_0$ (dashed, red line) and $\lambda_i, i=1,2,$ (solid, blue line) for $p_*=0.85, \alpha=0.5$. For the parameter $p^*\in(0.85,1],$ one finds $\lambda_i<0, i=1,2.$ Thus, the whole range of $p^*\in(0.85,1]$ corresponds to a NCP intermediate map, which indicates the non-Markovianity of generalized Weyl channel.}}\label{fig2}
\end{figure}

{\sf Remark 3:} This simple case shows that the non-Markovianity of generalized Weyl channel depends on the choice of the parameters $\alpha$ and $p_*$. Moreover, the above discussions can be naturally extended to arbitrary high-dimensional cases in a similar way.

Based on the above discussions, we have the following two observations:
\begin{enumerate}
\item [\rm (1)]
In case (i), the crossover point $p_*=\alpha_{-}$ of the eigenvalues represents a singular point of the intermediate map, since the eigenvalues $\lambda_j\; (j=0,1,\ldots,d-1)$ diverge for any $p^*\in(p_*,1].$ The other point $\alpha_{+}>1$ is outside the domain of $p^*$, so it does not need to consider for this study.

\item [\rm (2)]
In case (ii), one finds that if $p^*\rightarrow p_*$, i.e., $p^*\rightarrow 0.85$, then the eigenvalue $\lambda_i\rightarrow0^- (i=1,2)$. This suggests that the instantaneous intermediate map is NCP. The NCP character of this intermediate map is essentially related to the non-Markovianity based on Rivas-Huelga-Plenio (RHP) measure  \cite{Riv-2010}. One can easily find that the trace of Choi matrix $\tr[\chi(\alpha,p^*,p_*)]=d$ from the eigenvalues relation Eq. \eqref{eigenvalue}. We claim that the trace norm of the normalized Choi matrix, $\left\|\frac{1}{d}\chi(\alpha,p^*,p_*)\right\|_{1}=
\left\|[\mathcal{E}(p^*,p_*)\otimes\mathbbm{1}_d]
(\frac{1}{d}|\phi^+\rangle\langle\phi^+|)\right\|_{1}=1$, if and only if the intermediate map $\mathcal{E}(p^*,p_*)$ is CP \cite{Riv-2010}. In fact, note that the intermediate map $\mathcal{E}(p^*,p_*)$ is trace preserving and the Choi matrix $\chi(\alpha,p^*,p_*)$ is Hermitian, $\left\|\frac{1}{d}\chi(\alpha,p^*,p_*)\right\|_{1}=
\frac{1}{d}\sum_{j=0}^{d-1}|\lambda_j|=1$ if and only if the eigenvalues $\lambda_j\geq0$ (i.e., the Choi matrix $\chi(\alpha,p^*,p_*)$ is positive semidefinite), which is equivalent to the intermediate map $\mathcal{E}(p^*,p_*)$ is CP by the Choi-Jamio{\l}kowski isomorphism \cite{Choi-1975}. Hence, the negative eigenvalues for the Choi matrix imply that its normalized trace norm $\left\|\frac{1}{d}\chi(\alpha,p^*,p_*)\right\|_{1}>1$, which provides a witness of the NCP character of $\mathcal{E}(p^*,p_*)$. We can define
$$
g(p)=\lim_{\epsilon\rightarrow0^+}
\frac{\left\|[\mathcal{E}(p^*,p_*)\otimes\mathbbm{1}_d]
(\frac{1}{d}|\phi^+\rangle\langle\phi^+|)\right\|_{1}-1}{\epsilon},
$$
where $\epsilon=p^*-p_*$. It is clear that $g(p)\geq0$, and $g(p)=0$ if and only if the generalized Weyl channel is Markovian. The integral $\mathcal{I}=\int_{0}^{1}g(p)\mathrm{d}p$ provides a quantification  measure of non-Markovianity, which is the Rivas-Huelga-Plenio (RHP) measure \cite{Riv-2010}.
\end{enumerate}

On the other hand, compared to the RHP measure, Hall-Cresser-Li-Andersson (HCLA) proposed the other non-Markovianity measure based on negative decoherence rates \cite{Hall-2014}, which was a possibly computationally easier method. In the next section, we use the decoherence rates in the canonical form of the master equation to discuss the non-Markovianity of the generalized Weyl channel.

\section{Negative decoherence rates in the canonical form of the master equation}
\label{sec:4}
The time-like (i.e., parameter $p$) evolution of the system can be written as the canonical form of the master equation \cite{Hall-2014}
\begin{equation}
\frac{\mathrm{d}\rho(p)}{\mathrm{d}p}=-\mathrm{i}[H(p),\rho(p)]+
\sum_i\gamma_i(p)\left(L_i(p)\rho(p)L_i^\dag(p)-
\frac{1}{2}\{L_i^\dag(p) L_i(p),\rho(p)\}\right),
\end{equation}
where $H(p)$ is the Hamiltonian, $\gamma_i(p)$ are the decoherence rates, and $L_i(p)$ are the traceless orthonormal operators [i.e., $L_i(p)$ satisfy $\tr[L_i(p)]=0, \tr[L_i(p)L_j^\dag(p)]=\delta_{ij}$]. 

Based on the above canonical form of the master equation, we have the following non-Markovian criterion. 

\textbf{Criterion 2:} If there exists some decoherence rate $\gamma_i(p)<0$, then the generalized Weyl map $\mathcal{E}(p)$ in Eq. \eqref{special-Weyl} is non-Markovian. Otherwise, it is Markovian.

The evolved state $\rho(p)=\mathcal{E}^{\text{int}}(p)[\rho(0)]$ for the intermediate map $\mathcal{E}^{\text{int}}(p)$ in Eq. \eqref{int-map}, which satisfies the following master equation in the canonical form
\begin{equation}\label{master-equation}
\frac{\mathrm{d}\rho(p)}{\mathrm{d}p}=
\sum_{i=1}^{d-1}\gamma_i(p)\left[U_{di}\rho(p)U_{di}^\dag-
\rho(p)\right],
\end{equation}
where $\gamma_i(p)=\gamma(p)$ for $i=1,\ldots,d-1.$ Note that the function $G(p)$ is given by Eq. \eqref{G-function}; we can obtain the expression of the decoherence rates $\gamma(p)$ as follows:
\begin{equation}\label{decoherence-rate}
\gamma(p)=-\frac{1}{d}\times\frac{\dot{G}(p)}{G(p)}=
\frac{1+\alpha-\frac{2(d-1)}{d}\alpha p}{(d-1)\alpha p^2-d(1+\alpha)p+d}=
\frac{2}{d}\times \frac{\frac{1}{2}(\alpha_{+}+\alpha_{-})-p}
{(p-\alpha_{-})(p-\alpha_{+})},
\end{equation}
where $\dot{G}(p):=\frac{\mathrm{d}G(p)}{\mathrm{d}p},$ $\alpha_{\pm}$ are the two real roots in Eq. \eqref{two-roots} and $\alpha_{+}+\alpha_{-}=\frac{d}{d-1}\times\frac{1+\alpha}{\alpha},  \alpha\in(0,1]$. In particular, for $\alpha=0$, $G(p)=1-p$ and $\gamma(p)=\frac{1}{d(1-p)}$. Since $\frac{\alpha_{+}+\alpha_{-}}{2}=\frac{d}{2(d-1)}(1+\frac{1}{\alpha})>1\geq p$ for the non-Markovian parameter $\alpha\in(0,1]$; if $p<\alpha_-$, the decoherence rate $\gamma(p)\geq0,$ then the generalized Weyl map $\mathcal{E}(p)$ is Markovian. Otherwise, it becomes non-Markovian if $\alpha_-<p\leq1.$ Here $p=\alpha_-$ is a singular point of the intermediate map. The decoherence rate $\gamma(p)$ in Eq. \eqref{decoherence-rate} is a generalization of Ref. \cite{Shr-2018}.

With an appropriate choice of the parameter $\alpha$, we can demonstrate that the decoherence rate $\gamma(p)<0$ under certain conditions. Below, we present a simple example, and the other cases of high-dimensional can be similarly verified.

{\it Example 1:} For $d=3$ and $\alpha=0$ (the case of no perturbation), the decoherence rate $\gamma(p)=\frac{1}{3(1-p)}>0$, which implies the generalized Weyl channel is Markovian. If we take the parameter $\alpha=0.8,$ one has $\alpha_-\approx0.7, \alpha_+\approx2.67$, the decoherence rate $\gamma(p)=\frac{\frac{9}{8}-\frac{2}{3}p}{(p-0.7)(p-2.67)}.$ One can obtain that the decoherence rate $\gamma(p)<0$ in the range of $p\in(0.7,1]$, which indicates the generalized Weyl channel is non-Markovian. In Fig. \ref{fig3}, we plot the decoherence rate $\gamma(p)$ with respect to the parameter $\alpha=0$ and $\alpha=0.8,$ respectively.

\begin{figure}[htbp]
\centering
\setlength{\abovecaptionskip}{0.05cm}
{\includegraphics[width=7cm,height=6cm]{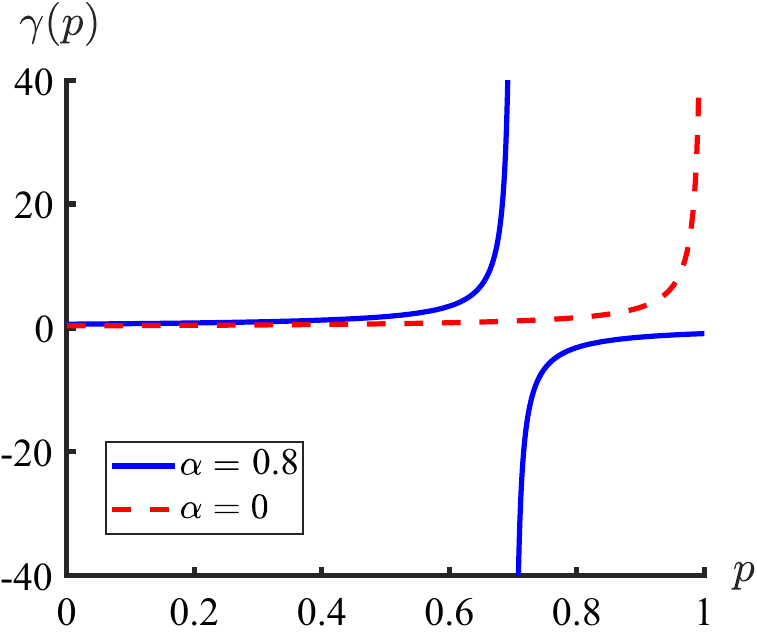}}
\caption{\footnotesize{A plot of the decoherence rates $\gamma_i(p)=\gamma(p)\; (i=1,\ldots,8)$ for the case of $d=3$. The decoherence rate $\gamma(p)$ is a function of $p$ for $\alpha=0.8$ (solid, blue curve) and $\alpha=0$ (dashed, red curve). The generalized Weyl channel is Markovian in the range $p\in[0,1]$ for $\alpha=0$. While for $\alpha=0.8$, the decoherence rate $\gamma(p)<0$ at the range of $p\in(0.7,1]$, which indicates the generalized Weyl channel is non-Markovian.}}\label{fig3}
\end{figure}

Let us recall that the non-Markovianity measure of HCLA \cite{Hall-2014}. Note that the canonical decoherence rates $\gamma_i(p)$ in  Eq. \eqref{decoherence-rate} may exist the negative ones; we can define
$$
f_i(p)=\max\left\{0,-\gamma_i(p)\right\}.
$$
It is clear that $f_i(p)\geq0$ for all $i=1,\ldots,d-1$ and $f_i(p)=0$  if and only if the generalized Weyl map is Markovian. It is non-Markovian if $f_i(p)>0,$ one can use the function $f_i(p)$ to quantify the non-Markovianity as
$$
\mathcal{N}_{\text{HCLA}}=-\int_{\alpha_-}^1\gamma(p)\mathrm{d}p,
$$
where $\gamma(p)$ is defined by Eq. \eqref{decoherence-rate}. However, the function $\gamma(p)$ diverges at the singular point $p=\alpha_-$. To avoid this problem, following an idea proposed in \cite{Riv2014,Shr-2018}, we replace $-\gamma(p)$ by its normalized version
\begin{equation}
\gamma^{\prime}(p):=\frac{-\gamma(p)}{1-\gamma(p)}=
\frac{\frac{2(d-1)}{d}\alpha p-(1+\alpha)}{h(\alpha,p)},
\end{equation}
where $h(\alpha,p)=(d-1)\alpha p^2-\left[\left(d-\frac{2(d-1)}{d}\right)\alpha+d\right]p+d-1-\alpha$. Hence, we get a normalized  HCLA measure
\begin{align}\label{N-measure}
\mathcal{N}^{\prime}_{\text{HCLA}}=&\int_{\alpha_-}^1\gamma^{\prime}(p)\mathrm{d}p
=\left[\frac{1}{d}\ln |h(\alpha,p)|-\frac{2(d-1)\alpha}{d^2\sqrt{\Delta}}\ln\left|\frac{p-p_+}{p-p_-}\right|
\right]\bigg|_{\alpha_-}^{1}\nonumber\\
=&\frac{1}{d}\ln\left|\frac{h(\alpha,1)}{h(\alpha,\alpha_-)}\right|+
\frac{2(d-1)\alpha}{d^2\sqrt{\Delta}}\ln
\left|\frac{(1-p_-)(\alpha_{-}-p_+)}{(1-p_+)(\alpha_{-}-p_-)}\right|,
\end{align}
where $\Delta=\left[\left(d-\frac{2(d-1)}{d}\right)\alpha+d\right]^2-4(d-1)\alpha(d-1-\alpha)$ is the discriminant of the function $h(\alpha,p),$ and
$p_{\pm}=\frac{\left[d-\frac{2(d-1)}{d}\right]\alpha+d
\pm\sqrt{\Delta}}{2(d-1)\alpha}$ are the two real roots of $h(\alpha,p).$ In particular, Eq. \eqref{N-measure} reduces to the case of $d=2$ in Ref. \cite{Shr-2018}.

In the following, we give a simple example to show that the values of the normalized HCLA measure $\mathcal{N}^{\prime}_{\text{HCLA}}$ increase with the parameter $\alpha$.

{\it Example 2:} For $d=3$, Eq. \eqref{N-measure} reduces to
\begin{equation}
\mathcal{N}^{\prime}_{\text{HCLA}}=\frac{1}{3}\ln
\left|\frac{-\frac{2}{3}\alpha-1}
{2\alpha\alpha_{-}^2-(\frac{5}{3}\alpha+3)\alpha_{-}+2-\alpha}\right|+
\frac{4\alpha}{3\sqrt{97\alpha^2-54\alpha+81}}\ln
\left|\frac{(1-p_-)(\alpha_{-}-p_+)}{(1-p_+)(\alpha_{-}-p_-)}\right|,
\end{equation}
where $\alpha_-=\frac{3\left(\alpha+1-\sqrt{\alpha^2-\frac{2}{3}\alpha+1}\right)}
{4\alpha}$ , and $p_{\pm}=\frac{5\alpha+9\pm\sqrt{97\alpha^2-54\alpha+81}}{12\alpha}$.
A plot of the values of $\mathcal{N}^{\prime}_{\text{HCLA}}$ is given in Fig. \ref{fig4}. One can see that $\mathcal{N}^{\prime}_{\text{HCLA}}$ as a function of non-Markovian parameter $\alpha$, which increases monotonically with $\alpha$.

\begin{figure}[htbp]
\centering
\setlength{\abovecaptionskip}{0.05cm}
{\includegraphics[width=7cm,height=6cm]{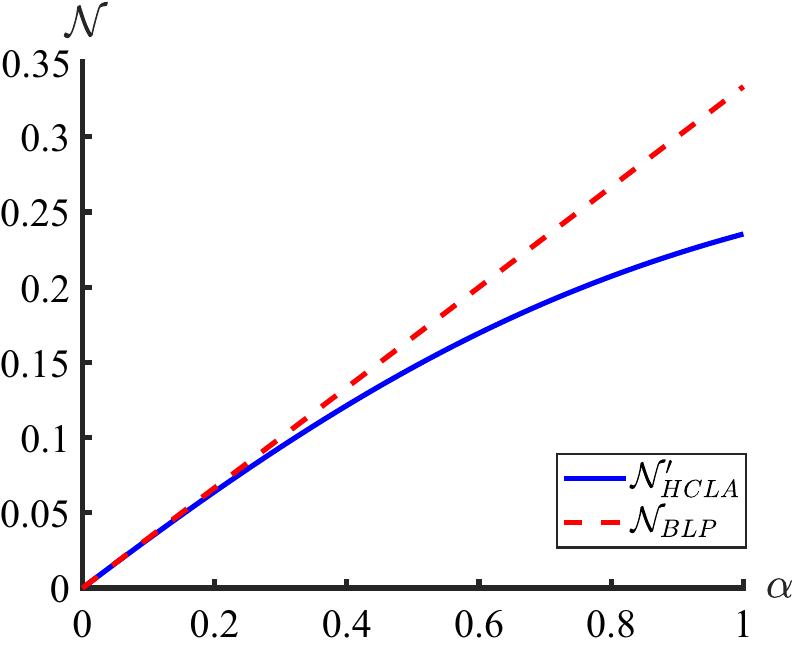}}
\caption{\footnotesize{A plot of the normalized HCLA measure $\mathcal{N}^{\prime}_{\text{HCLA}}$ (solid, blue curve) and the BLP measure $\mathcal{N}_{\text{BLP}}$ (dashed, red line) as a function of the non-Markovian parameter $\alpha$, respectively.}}\label{fig4}
\end{figure}

\section{The singularity of intermediate map is not pathological}
\label{sec:5}
As discussed above, $p=\alpha_{-}$ is the singular point of the intermediate map $\mathcal{E}^{\text{int}}(p)\; (p\in[p_*,p^*])$, i.e., the generalized Weyl map is temporarily noninvertible at the point $p=\alpha_{-}$, after which the invertibility is recovered. In this section, we find the following two phenomena:  first, the singularity is not pathological in the sense of the density operator; second, although
the generator in the master equation Eq. \eqref{master-equation} has a
 singularity, and the dynamics is still regular (i.e., its solution does not contain singularity) \cite{Chr2010}.

For any initial state $\rho(0)=[\rho_{ij}]_{d\times d}$, the intermediate map evolves the initial state $\rho(0)$ to the diagonal state $\rho(p)=\text{diag}(\rho_{00},\cdots,\rho_{d-1,d-1})$ at the point $p=\alpha_-.$  We illustrate this aspect with the following mathematical arguments. Note that the eigenvalues in Eq. \eqref{eigenvalue} satisfy $\sum_{j=0}^{d-1}\lambda_j=d$; for the intermediate map $\mathcal{E}^{\text{int}}(p)$ of Eq. \eqref{int-map} acts on the initial state $\rho(0)$, one can expand its to the matrix form as follows:
\begin{equation}\label{evolve-p-matrix}
\rho(p)=
\begin{pmatrix}
\rho_{00} & \frac{G(p^*)}{G(p_*)}\rho_{01} & \cdots & \frac{G(p^*)}{G(p_*)}\rho_{0,d-1}\\
\frac{G(p^*)}{G(p_*)}\rho_{10} & \rho_{11} & \cdots & \frac{G(p^*)}{G(p_*)}\rho_{1,d-1}\\
\vdots  &  \vdots & \ddots & \vdots \\
\frac{G(p^*)}{G(p_*)}\rho_{d-1,0} & \frac{G(p^*)}{G(p_*)}\rho_{d-1,1} & \cdots & \rho_{d-1,d-1}
\end{pmatrix},
\end{equation}
where $\frac{G(p^*)}{G(p_*)}=\frac{(p^*-\alpha_-)(p^*-\alpha_+)}
{(p_*-\alpha_-)(p_*-\alpha_+)}$. When $p=p^*=\alpha_-$ and $p_*<p^*$, Eq. \eqref{evolve-p-matrix} reduces to the diagonal matrix $\rho(\alpha_-)=\text{diag}(\rho_{00},\cdots,\rho_{d-1,d-1})$. In this sense, all the off-diagonal terms of $\rho(p)$ are vanishing at the singular point $p=\alpha_-,$ which corresponds to an instance of noninvertibility for generalized Weyl map. In addition, one can find that all the initial states $\rho(0)$ become indistinguishable at this point $p=\alpha_-$ momentarily, since the corresponding evolve states $\rho(\alpha_-)$ are of the same diagonal matrix. However, the singularity of the intermediate map is not pathological in the sense of the density operator since the evolve state $\rho(\alpha_-)$ is still a density operator.

On the other hand, if the intermediate map $\mathcal{E}^{\text{int}}(p)$ of Eq. \eqref{int-map} acts on the evolve diagonal state $\rho(\alpha_-)$, one obtains
$$
\mathcal{E}^{\text{int}}(p)[\rho(\alpha_-)]=
\frac{1}{d}\left[\lambda_0\rho(\alpha_-)
+\sum_{i=1}^{d-1}\lambda_iU_{di}\rho(\alpha_-) U^\dag_{di}\right]=
\frac{1}{d}\sum_{j=0}^{d-1}\lambda_j\rho(\alpha_-)=\rho(\alpha_-),
$$
where $\lambda_j, j=0,1,\ldots,d-1$ are defined by Eq. \eqref{eigenvalue}. This means that the action of the intermediate map
$\mathcal{E}^{\text{int}}(p)$ on the state $\rho(\alpha_-)$ remains invariant. More specifically, the expressions of the eigenvalues $\lambda_j$ contain the infinite term $\frac{(\alpha_{-}-p^*)(\alpha_{+}-p^*)}
{(\alpha_{-}-p_*)(\alpha_{+}-p_*)}$ (in the sense of $p_*=\alpha_{-}$ and $p_*<p^*$), which corresponding to the singularity of the intermediate map. The infinite term (i.e., the singularity) has no effect on the density operator $\rho(\alpha_-)$ as it multiply with the off-diagonal terms of the density operator $\rho(\alpha_-)$, which are vanishing.

Similarly, although the decoherence rates $\gamma_i(p)=\gamma(p) \; (i=1,\ldots,d-1)$ in the generator of Eq. \eqref{master-equation} are divergent at the singular point $p=\alpha_-,$ the singularity has no effect on the solution of Eq. \eqref{master-equation}. In fact, these terms $U_{di}\rho(\alpha_-)U^\dag_{di}-\rho(\alpha_-)=0$ for all $i=1,\ldots,d-1$, then they multiply with the decoherence rates $\gamma_i(p)$ are vanishing. Therefore, the solution of Eq. \eqref{master-equation} does not contain singularity, it is regular \cite{Chr2010}.

\section{Quantifying non-Markovianity by the trace distance measure}
\label{sec:6}
In Ref. \cite {BLP-2009}, Breuer-Laine-Piilo (BLP) proposed a non-Markovianity measure by the trace distance of a pair initial states. For a CPTP map $\mathcal{E}(p)$ acts on the initial states $\rho_i(0)$, i.e., $\rho_i(p)=\mathcal{E}(p)[\rho_i(0)], i=1,2$,
one can define the rate of change of the trace distance by
\begin{equation}
\sigma(p,\rho_i(0))=\frac{\mathrm{d}}{\mathrm{d}p}D[\rho_1(p),\rho_2(p)],
\end{equation}
where the trace distance $D$ is defined by Eq. \eqref{trace-distance}. If $\sigma(p,\rho_i(0))\leq0,$ then the dynamical map $\mathcal{E}(p)$ is Markovian. Otherwise, it is non-Markovian for some time-like parameter $p\in[0,1]$. The non-Markovianity measure of BLP is defined by
\begin{equation}\label{max-nonMark}
\mathcal{N}_{\text{BLP}}=\max_{\rho_i(0)}\int_{\sigma>0}dp\;\sigma(p,\rho_i(0)).
\end{equation}

For a time-like parameter $p\in[0,1]$, we consider the generalized Weyl map $\mathcal{E}(p)$ of Eq. \eqref{special-Weyl} acts on a pair of initial states $\rho_i(0), i=1,2.$ The qubit pure states of the trace distance measure was analyzed in Ref. \cite{Shr-2018}. In this section, we shall generalize to the high-dimensional cases. For the general pure state $|\psi\rangle,$ it can be parameterized as $|\psi\rangle=\sum_{k=0}^{d-1}\eta_k|k\rangle,$
where
\begin{equation}\label{general-pure-state}
\begin{cases}
\eta_0=\cos\theta_{d-1},\\
\eta_1=\sin\theta_{d-1}\cos\theta_{d-2}e^{\mathrm{i}\phi_{d-1}},\\
\cdots\\
\eta_{d-2}=\sin\theta_{d-1}\sin\theta_{d-2}\cdots
\sin\theta_{2}\cos\theta_{1}e^{\mathrm{i}\phi_{2}},\\
\eta_{d-1}=\sin\theta_{d-1}\sin\theta_{d-2}\cdots
\sin\theta_{2}\sin\theta_{1}e^{\mathrm{i}\phi_{1}},
\end{cases}
\end{equation}
and $\theta_k\in[0,\frac{\pi}{2}], \phi_{k}\in[0,2\pi)$ for $k=1,\ldots,d-1.$ However, the calculation of trace distance becomes quite complicated under this parameterization, we turn to a simpler approach to study the non-Markovianity measure in the sense of BLP. That is, we restrict the pair of orthogonal initial states to be mutually unbiased bases (MUBs). Two sets of the orthonormal bases $\{|\psi_i\rangle\}_{i=0}^{d-1}$ and $\{|\phi_j\rangle\}_{j=0}^{d-1}$ in $\mathbb{C}^d$ are called mutually unbiased if $\left|\langle\psi_i|\phi_j\rangle\right|^2=\frac{1}{d}$ for all $i,j=0,\ldots,d-1.$ It is well known that the number of MUBs in $\mathbb{C}^d$ is at most $d+1$ \cite{Woot-1989}. If $d$ is a prime power, then there exist $d+1$ MUBs, i.e., a complete set of MUBs, but for other dimensions the maximal number of MUBs remains unknown \cite{Dur-2004}.

{\sf Remark 4:} As illustrated in Ref. \cite{BLP-2009}, the maximization over the pair of initial states $\rho_i(0)$ in Eq. \eqref{max-nonMark} can be performed by drawing a sufficiently large sample of random pairs of initial states. The numerical results show that the non-Markovianity measure $\mathcal{N}_{\text{BLP}}$ reaches maximum value when the pair of initial states is orthogonal.
Hence, this finding is the reason why we choose orthogonal MUBs as the pair of initial states to study the non-Markovianity measure in terms of BLP.

In the following, we first consider the simple case of $d=3$, and then extend our analysis to arbitrarily high-dimensional cases.

For $d=3$, we choose two orthogonal initial states $|\psi_i\rangle$ from the same set of MUBs $\mathcal{B}_k \;(k=0,1,2,3)$, which are given by \cite{Wie-2011}
\begin{align}\label{MUBs}
&\mathcal{B}_0=\left\{(1,0,0)^T, (0,1,0)^T,(0,0,1)^T\right\},\nonumber\\
&\mathcal{B}_1=\left\{\frac{1}{\sqrt{3}}(1,1,1)^T,
\frac{1}{\sqrt{3}}(1,\omega_3,\omega_3^2)^T,
\frac{1}{\sqrt{3}}(1,\omega_3^2,\omega_3)^T\right\}, \nonumber\\
&\mathcal{B}_2=\left\{\frac{1}{\sqrt{3}}(1,\omega_3,\omega_3)^T,
\frac{1}{\sqrt{3}}(1,\omega_3^2,1)^T,
\frac{1}{\sqrt{3}}(1,1,\omega_3^2)^T\right\}, \nonumber\\
&\mathcal{B}_3=\left\{\frac{1}{\sqrt{3}}(1,\omega_3^2,\omega_3^2)^T,
\frac{1}{\sqrt{3}}(1,1,\omega_3)^T,
\frac{1}{\sqrt{3}}(1,\omega_3,1)^T\right\}.
\end{align}
Here $\omega_3=e^{2\pi\mathrm{i}/3},$ and $T$ denotes the transposition of vectors.

The evolve states $\rho_{i}(p)=\mathcal{E}(p)[\rho_{i}(0)]$ are given by
\begin{equation}\label{d=3-Weyl-c}
\rho_{i}(p)=\left[1-\frac{2}{3}\kappa(p)\right]\rho_{i}(0)+\frac{\kappa(p)}{3}
\left(U_{3}\rho_{i}(0)U_{3}^\dag+U_{6}\rho_{i}(0)U_{6}^\dag\right),
\end{equation}
where $\kappa(p)=p\left[1+\alpha(1-\frac{2}{3}p)\right]$, and $\rho_{i}(0)=|\psi_i\rangle\langle\psi_i|$ for $i=1,2$. The trace distance between $\rho_{1}(p)$ and $\rho_{2}(p)$ is given by
\begin{equation}\label{p-trace-dis}
D[\rho_{1}(p),\rho_{2}(p)]=\frac{1}{2}\tr\sqrt{[\rho_{1}(p)-\rho_{2}(p)]^2}.
\end{equation}

In Appendix D, we show that the trace distance $D[\rho_{1}(p),\rho_{2}(p)]$ is given by
$$
D[\rho_{1}(p),\rho_{2}(p)]=
\begin{cases}
1,& \; \text{if}~ |\psi_i\rangle\in\mathcal{B}_0,\\
\frac{2\alpha}{3}\left|(p-\alpha_-)(p-\alpha_+)\right|,& \; \text{if}~ |\psi_i\rangle\in\mathcal{B}_k\setminus\mathcal{B}_0 ,
\end{cases}
$$
where $\alpha_{\pm}=\frac{3\left(\alpha+1\pm\sqrt{\alpha^2-\frac{2}{3}\alpha+1}\right)}{4\alpha}$.
In particular, when $\alpha=0$, we have $\kappa(p)=p$ and the trace distance reduces to
$$
D[\rho_{1}(p),\rho_{2}(p)]=
\begin{cases}
1,& \; \text{if}~ |\psi_i\rangle\in\mathcal{B}_0,\\
1-p,& \; \text{if}~ |\psi_i\rangle\in\mathcal{B}_k\setminus\mathcal{B}_0.
\end{cases}
$$

For the initial states $|\psi_i\rangle\in\mathcal{B}_0$, we find that the trace distance is a constant independent of $p$ and $\alpha$. However, our main purpose is to study the case where the trace distance $D$ is related to $p$ and $\alpha$. Therefore, we select the nine different pairs of initial states from $\mathcal{B}_k\setminus\mathcal{B}_0$ to investigate the trace distance measure.

For the pair of orthogonal initial states $|\psi_i\rangle\in\mathcal{B}_k\setminus\mathcal{B}_0$, the trace distance $D$
is depicted in Fig. \ref{fig5} for the non-Markovian parameter $\alpha=0,\;0.4,\;0.7,$ respectively. It can be observed that the generalized Weyl map $\mathcal{E}(p)$
exhibits non-Markovian behavior in the range $p\in(\alpha_-,1]$, as the trace distance increases monotonically within this interval. Moreover, the corresponding non-Markovian region expands as the non-Markovian parameter $\alpha$ increases.

\begin{figure}[htbp]
\centering
\setlength{\abovecaptionskip}{0.05cm}
{\includegraphics[width=7cm,height=6cm]{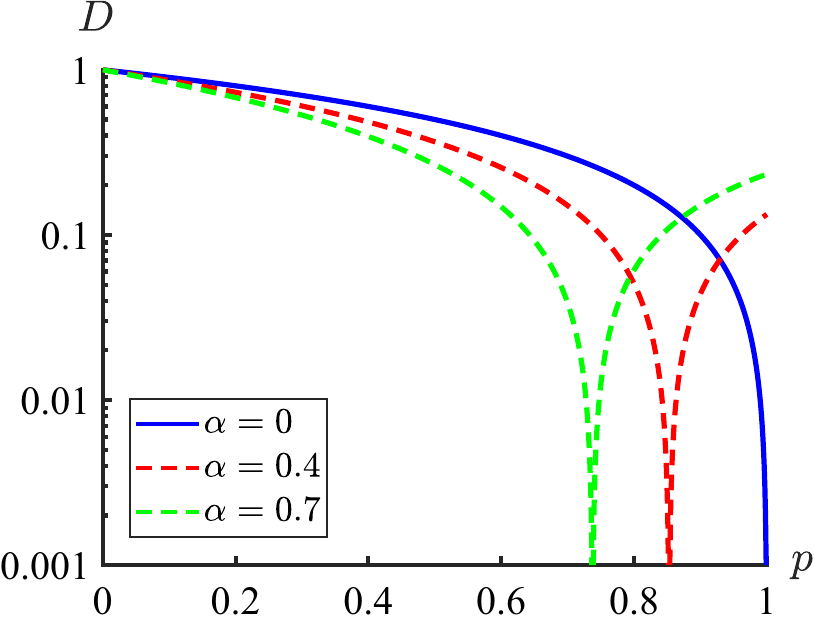}}
\caption{\footnotesize{A logarithmic plot (we just take logarithm to the vertical axis) of trace distance $D$ as a function of $p$. Three cases are presented: the non-Markovian parameter $\alpha=0$ (solid, blue curve), $\alpha=0.4$ (dashed, red curve), and $\alpha=0.7$ (dashed, green curve).}}\label{fig5}
\end{figure}

By taking over these nine different pairs of initial pure states $|\psi_i\rangle\in\mathcal{B}_k\setminus\mathcal{B}_0$, we can obtain the quantity of the non-Markovianity measure $\mathcal{N}_{\text{BLP}}$ as follows:
\begin{align}
\mathcal{N}_{\text{BLP}}=&\max_{|\psi_i\rangle\in\mathcal{B}_k\setminus\mathcal{B}_0}
\int_{\alpha_-}^1 \frac{\mathrm{d}D}{\mathrm{d}p}\mathrm{d}p=
\frac{2\alpha}{3}\int_{\alpha_-}^1 \left|2p-(\alpha_{+}+\alpha_-)\right|\mathrm{d}p  \nonumber\\
=&\frac{2\alpha}{3}\left|1-(\alpha_{+}+\alpha_-)+\alpha_{+}\alpha_-\right|
=\frac{\alpha}{3},
\end{align}
where the fourth equality holds because $\alpha_{+}+\alpha_-=\frac{3(\alpha+1)}{2\alpha}$ and $\alpha_{+}\alpha_{-}=\frac{3}{2\alpha}$. The BLP measure $\mathcal{N}_{\text{BLP}}$ (dashed, red line) is depicted in Fig. \ref{fig4}. One can see that in some intervals of the non-Markovian parameter $\alpha$, the quantification of non-Markovianity is in agreement between the BLP measure $\mathcal{N}_{\text{BLP}}$ and the normalized HCLA measure $\mathcal{N}^{\prime}_{\text{HCLA}}$.

Although the existence of a complete set of MUBs is unknown when the dimension of the system is not a prime power, one can always utilize all the known sets of MUBs to generalize the non-Markovianity measure in terms of BLP to high-dimensional cases. For the generalized Weyl map $\mathcal{E}(p)$ defined by Eq. \eqref{special-Weyl} acts on a pair of initial states $\rho_i(0)$ (which are the form of orthogonal MUBs), in Appendix D, we demonstrate the trace distance
$$
D[\rho_{1}(p),\rho_{2}(p)]=
\begin{cases}
1,& \; \text{if}~ |\psi_i\rangle\in\mathcal{B}_0,\\
\frac{(d-1)\alpha}{d}\left|(p-\alpha_-)(p-\alpha_+)\right|,& \; \text{if}~ |\psi_i\rangle\in\mathcal{B}_k\setminus\mathcal{B}_0.
\end{cases}
$$
Here $\alpha_{\pm}$ are given by Eq. \eqref{two-roots}, and the set $\mathcal{B}_0=\left\{e_j\right\}_{j=0}^{d-1}$, where the $(j+1)$-th component of the column vector $e_j$ is 1 and the rest are $0$. By taking over all known orthogonal MUBs as a pair of initial states, we can obtain the quantity of non-Markovianity measure in terms of BLP as follows:
\begin{equation}
\mathcal{N}_{\text{BLP}}=\max_{|\psi_i\rangle\in\mathcal{B}_k\setminus\mathcal{B}_0}
\int_{\alpha_-}^1 \frac{\mathrm{d}D}{\mathrm{d}p}\mathrm{d}p=
\frac{(d-1)\alpha}{d}\int_{\alpha_-}^1 \left|2p-(\alpha_{+}+\alpha_-)\right|\mathrm{d}p=\frac{\alpha}{d}.
\end{equation}
In particular, when $d=2$, the quantity of non-Markovianity measure in terms of BLP is the same as Ref. \cite{Shr-2018}.

\section{Conclusions and discussions}\label{sec:7}
In this paper, we have extended the results of Ref. \cite{Shr-2018} to the case of qudit by using a special class of generalized Weyl channel. They are characterized by the sum of Kraus operators where only part of Kraus operators are proportional
to the Weyl diagonal matrices and the rest are vanishing. We have explored the non-Markovianity by investigating the eigenvalues of the Choi matrix corresponding to the intermediate map. Specifically, we have provided a straightforward case for $d=3$, one can find that a singular point of the intermediate map occurring at the crossover of its $d$ eigenvalues.

Furthermore, we have demonstrated that the singularity in the intermediate map is not pathological in terms of density operators, and the solution of intermediate map remain regular. Additionally, we have quantified non-Markovianity by using the HCLA measure and the BLP measure, respectively. In particular, we have chosen the MUBs as the pair of orthogonal initial states to quantify the non-Markovianity based on the BLP measure.

Recently, a full characterization of quantum memory witness in the case of qubit system has been provided in Ref. \cite{Ban-2023}.
It is intriguing to extend quantum memory witness to high-dimensional scenarios by using generalized Weyl channels. Exploring the geometrical characterization of non-Markovianity \cite{Lor-2013,Abu-2024} under the generalized Weyl channels also offers a promising avenue for future research. Moreover, instead of restricting to a pair of orthogonal MUBs as initial states, we may select more general parameterized initial pure states, such as Eq. \eqref{general-pure-state}, which offers a challenge to study the non-Markovianity in terms of the BLP measure. These future research directions hold the potential to deepen our understanding of quantum dynamics and pave the way for practical applications in quantum information processing.

\section*{Data availability statement}
All data that support the findings of this study are included within the article (and any supplementary files).

\section*{Acknowledgments}
 This work is supported by Key Lab of Guangzhou for Quantum Precision Measurement under Grant No.202201000010, the Guangdong Basic and Applied Basic Research Foundation under Grant No.2023A1515012074 and the National  Natural  Science  Foundation of China (NSFC) under Grants No.12371458, No.12175147; Zhejiang Provincial Natural Science Foundation of China under Grant No.LZ24A050005; Jiangxi Provincial Natural Science Foundation of China, under Grants No. 20213BCJL22054.

\appendix

\section*{Appendix A. The Weyl matrices for the case of $d=3$}

The following are the Weyl matrices for $d=3$: $U_0:=U_{00}=\mathbbm{1}_3$ and
\begin{equation*}
U_1:=U_{01}=
\begin{pmatrix}
0& 1 &0\\
0& 0 &1\\
1& 0 &0
\end{pmatrix},
U_2:=U_{02}=
\begin{pmatrix}
0& 0 &1\\
1& 0 &0\\
0& 1 &0
\end{pmatrix},
U_3:=U_{10}=
\begin{pmatrix}
1& 0 &0\\
0& \omega_3 &0\\
0& 0 &\omega_3^2
\end{pmatrix},
U_4:=U_{11}=
\begin{pmatrix}
0& 1 &0\\
0& 0 &\omega_3\\
\omega_3^2& 0 &0
\end{pmatrix},
\end{equation*}

\begin{equation*}
U_5:=U_{12}=
\begin{pmatrix}
0& 0 &1\\
\omega_3& 0 &0\\
0& \omega_3^2 &0
\end{pmatrix},
U_6:=U_{20}=
\begin{pmatrix}
1& 0 &0\\
0& \omega_3^2 &0\\
0& 0 &\omega_3
\end{pmatrix},
U_7:=U_{21}=
\begin{pmatrix}
0& 1 &0\\
0& 0 &\omega_3^2\\
\omega_3& 0 &0
\end{pmatrix},
U_8:=U_{22}=
\begin{pmatrix}
0& 0 &1\\
\omega_3^2& 0 &0\\
0& \omega_3& 0
\end{pmatrix},
\end{equation*}
where $\omega_3=e^{\frac{2\pi\mathrm{i}}{3}}.$
\section*{Appendix B. The proof of Eqs. \eqref{Choi-representation} and \eqref{superoperator-representation}} 
We can use the following two identities
\begin{align}
&|Z\rangle\rangle=(Z\otimes\mathbbm{1}_d)|\mathbbm{1}_d\rangle\rangle, \label{vec-1}\\
&|XZY\rangle\rangle=(X\otimes Y^T)|Z\rangle\rangle, \label{vec-2}
\end{align}
to prove Eqs. \eqref{Choi-representation} and \eqref{superoperator-representation}, where the operators $X,Y,Z\in\mathbb{C}^{d\times d},$ and $T$ is the transpose of matrix.

The Choi matrix representation is given by
\begin{align*}
C=&\sum_{i,j=0}^{d-1}\mathcal{E}(|i\rangle\langle j|)\otimes |i\rangle\langle j|=\sum_{a=0}^{d^2-1}\sum_{i,j=0}^{d-1}K_a|i\rangle\langle j|K_a^\dag\otimes |i\rangle\langle j| \nonumber\\
=&\sum_{a=0}^{d^2-1}(K_a\otimes \mathbbm{1}_d)\left(\sum_{i,j=0}^{d-1}|i\rangle\langle j|\otimes|i\rangle\langle j|\right)(K_a^\dag\otimes \mathbbm{1}_d) \nonumber\\
=&\sum_{a=0}^{d^2-1}(K_a\otimes \mathbbm{1}_d)|\mathbbm{1}_d\rangle\rangle\langle\langle\mathbbm{1}_d|(K_a^\dag\otimes \mathbbm{1}_d)
=\sum_{a=0}^{d^2-1}|K_a\rangle\rangle\langle\langle K_a|,
\end{align*}
where $|\mathbbm{1}_d\rangle\rangle=\sum_{i=0}^{d-1}|ii\rangle,$ and the last equality holds as Eq. \eqref{vec-1}.

The superoperator representation is given by
\begin{equation*}
\widehat{\mathcal{E}}|X\rangle\rangle=|\mathcal{E}(X)\rangle\rangle
=\sum_{a=0}^{d^2-1}|K_a X K_a^\dag\rangle\rangle=\sum_{a=0}^{d^2-1}(K_a\otimes \overline{K}_a)|X\rangle\rangle,
\end{equation*}
where the last equality holds as Eq. \eqref{vec-2}.

\section*{Appendix C. The derivation of Choi matrix for the intermediate map $\mathcal{E}(p^*,p_*)$} 
By the superoperator representation Eq. \eqref{superoperator-representation} and the Kraus operators Eq. \eqref{parameterization-kraus} of generalized Weyl channel $\mathcal{E}(p_*)$, one has
\begin{equation}
\widehat{\mathcal{E}}(p_*)=\sum_{i=0}^{d-1}K_{di}\otimes\overline{K}_{di}
=\text{diag}\left(D_0,D_1,\ldots,D_{d-1}\right).
\end{equation}
Here $D_i\;(i=0,1,\ldots,d-1)$ are $d\times d$ diagonal matrices whose $(i+1)$-th diagonal element being $1$, the other diagonal elements are $1-\frac{d-1}{d}\kappa(p_*)+\frac{\kappa(p_*)}{d}
\sum_{i=1}^{d-1}\omega_d^i=1-\kappa(p_*)$ due to the $d$th root of unity $\omega_d=e^{\frac{2\pi\mathrm{i}}{d}}$ satisfies the property $\sum_{i=0}^{d-1}\omega_d^i=0$. Similarly, we can also obtain the matrix form of the superoperator $\widehat{\mathcal{E}}(p^*)$ by replacing $p_*$ with $p^*$.

Combined Eq. \eqref{superoperator} with Eq. \eqref{invertible}, the superoperator of the intermediate map $\mathcal{E}(p^*,p_*)$ is given by
\begin{equation}\label{inter-super}
\widehat{\mathcal{E}}(p^*,p_*)|X\rangle\rangle=
|\mathcal{E}(p^*,p_*)(X)\rangle\rangle=
|\mathcal{E}(p^*)[\mathcal{E}^{-1}(p_*)(X)]\rangle\rangle=
\widehat{\mathcal{E}}(p^*)|\mathcal{E}^{-1}(p_*)(X)\rangle\rangle=
\widehat{\mathcal{E}}(p^*)\widehat{\mathcal{E}}^{-1}(p_*)|X\rangle\rangle.
\end{equation}
Hence, by the relations of Eqs. \eqref{reshuffling} and \eqref{inter-super}, the Choi matrix for the intermediate map $\mathcal{E}(p^*,p_*)$ is given by
\begin{equation*}
\chi(\alpha,p^*,p_*)=\left[\widehat{\mathcal{E}}(p^*,p_*)\right]^{\mathcal{R}}=\left[\widehat{\mathcal{E}}(p^*)
\widehat{\mathcal{E}}^{-1}(p_*)\right]^{\mathcal{R}},
\end{equation*}
which is the form of \eqref{Choi-inter}. In particular, the Choi matrix of Eq. \eqref{Choi-inter} reduces to
$$
\begin{pmatrix}
1& \;0& \;0& \; \frac{1-\kappa(p^*)}{1-\kappa(p_*)} \\
0& \;0&\; 0&\; 0\\
0& \;0& \;0& \;0\\
\frac{1-\kappa(p^*)}{1-\kappa(p_*)}&\; 0&\; 0&\; 1
\end{pmatrix},
$$
which is consistent with the case of $d=2$ in Ref. \cite{Shr-2018}.
In addition, for the case of $d=3$, the Choi matrix of Eq. \eqref{Choi-inter} is given  by
$$
\begin{pmatrix}
1& \;0& \;0& \;0& \;\frac{1-\kappa(p^*)}{1-\kappa(p_*)}& \;0& \;0&\; 0& \; \frac{1-\kappa(p^*)}{1-\kappa(p_*)}\\
0& \;0&\; 0&\; 0& \;0&\; 0&\; 0& \;0&\; 0\\
0& \;0& \;0& \;0& \;0& \;0& \;0& \;0& \;0\\
0& \;0& \;0& \;0& \;0& \;0& \;0& \;0& \;0\\
\frac{1-\kappa(p^*)}{1-\kappa(p_*)}&\; 0& \;0&\; 0& \;1&\; 0&\; 0&\; 0& \; \frac{1-\kappa(p^*)}{1-\kappa(p_*)}\\
0& \;0& \;0& \;0& \;0& \;0& \;0& \;0& \;0\\
0& \;0& \;0& \;0& \;0& \;0& \;0& \;0& \;0\\
0& \;0& \;0& \;0& \;0& \;0& \;0& \;0& \;0\\
\frac{1-\kappa(p^*)}{1-\kappa(p_*)}&\; 0& \;0&\; 0& \; \frac{1-\kappa(p^*)}{1-\kappa(p_*)}& \;0& \;0& \;0&\; 1
\end{pmatrix}.
$$

\section*{Appendix D. The derivation of trace distance for the pair of orthogonal MUBs} 
\textbf{The case of \boldmath{$d=3$}:} Combined Eq. \eqref{d=3-Weyl-c} with Eq. \eqref{p-trace-dis}, one can obtain the trace distance $D[\rho_{1}(p),\rho_{2}(p)]$. In the following, for $d=3, \kappa(p)=p\left[1+\alpha(1-\frac{2}{3}p)\right],$ $\alpha_{\pm}=\frac{3\left(\alpha+1\pm\sqrt{\alpha^2-\frac{2}{3}\alpha+1}\right)}
{4\alpha}$, we divide the selected pair of orthogonal MUBs into two cases for discussions.
\begin{enumerate}
\item [\rm (a)] If $|\psi_i\rangle\in\mathcal{B}_0,$ for instance, we choose $|\psi_1\rangle=(1,0,0)^T, |\psi_2\rangle=(0,1,0)^T.$ One has
\begin{equation}\label{rho-difference-1}
\rho_{1}(p)-\rho_{2}(p)=
\begin{pmatrix}
1 &\; 0 &\; 0 \\
0 &\; -1 &\; 0 \\
0 &\; 0 &\; 0
\end{pmatrix}.
\end{equation}
The corresponding singular values are $1, 1$, and $0$. Substituting Eq. \eqref{rho-difference-1} into Eq. \eqref{p-trace-dis}, one can get
\begin{equation}\label{trace-dis-B-0}
D[\rho_{1}(p),\rho_{2}(p)]=1.
\end{equation}
Similarly, the values of trace distance are the same as Eq. \eqref{trace-dis-B-0} for the other two choices of initial states. In particular, when $\alpha=0$, we have $\kappa(p)=p$ and the trace distance $D[\rho_{1}(p)-\rho_{2}(p)]=1.$

\item [\rm (b)] If $|\psi_i\rangle\in\mathcal{B}_k \;(k=1,2,3),$
for instance, we choose one pairs of initial pure states $|\psi_1\rangle=\frac{1}{\sqrt{3}}(1,1,1)^T$ and $|\psi_2\rangle=\frac{1}{\sqrt{3}}(1,\omega_3,\omega_3^2)^T$. Denote the difference between the evolve states $\rho_1(p)$ and $\rho_2(p)$ as  $A=\rho_{1}(p)-\rho_{2}(p)$, after some algebraic calculations, one obtains
\begin{equation}\label{rho-difference}
A=\frac{1-\kappa(p)}{3}
\begin{pmatrix}
0 &\; 1-\omega_3^2 &\; 1-\omega_3 \\
1-\omega_3 &\; 0 &\; 1-\omega_3^2 \\
1-\omega_3^2 &\; 1-\omega_3 &\; 0
\end{pmatrix}.
\end{equation}
One can easily obtain that the singular values of Eq. \eqref{rho-difference} are $[1-\kappa(p)]^2, [1-\kappa(p)]^2$ and $ 0.$ Substituting Eq. \eqref{rho-difference} into Eq. \eqref{p-trace-dis}, one can get
\begin{equation}\label{d=3-trace-distance}
D[\rho_{1}(p),\rho_{2}(p)]=\left|1-\kappa(p)\right|=
\frac{2\alpha}{3}\left|(p-\alpha_-)(p-\alpha_+)\right|.
\end{equation}
In particular, when $\alpha=0$, we have $\kappa(p)=p$ and the trace distance $D[\rho_{1}(p)-\rho_{2}(p)]=1-p.$

Similarly, the values of trace distance are the same as Eq. \eqref{d=3-trace-distance} for the other eight choices of initial states. Indeed, denote the difference between the other pairs of evolve states $\rho_1(p)$ and $\rho_2(p)$ as $A^{\prime}$, after some algebraic calculations, one finds that there always exists a complex invertible matrix $P\in\mathbb{C}^{3\times3}$, such that $P^{-1}A'^2P=A^2,$ i.e., $A^2$ is similar to $A'^2$. Therefore, $A'^2$ and $A^2$ have the same eigenvalues \cite{Hor-2013}. Then the values of trace distance are coincided with Eq. \eqref{d=3-trace-distance} for the other eight choices of initial states.
\end{enumerate}

Although the existence of a complete set of MUBs is unknown when the dimension of the system is not a prime power, one can always utilize all the known sets of MUBs to generalize the non-Markovianity measure in terms of BLP to high-dimensional cases.

Before showing the expression of trace distance $D\left[\rho_1(p),\rho_2(p)\right]$ for high-dimensional cases, we need the following technical lemma.

\begin{lemma}\label{lem:1}
Let $d\geq2$ be an integer and $\omega_d=e^{2\pi\mathrm{i}/d}$ be a
$d$th primitive root of unit. Then we have the following identity
\begin{equation}\label{poly-coefficients}
\sum_{k=0}^{d-1}\sum_{\substack{i,j=0\\i+j\equiv k \;\text{mod}\; d}}^{d-1}\left(1-\omega_d^{d-i}\right)
\left(1-\omega_d^{d-j}\right)x^k=
\begin{cases}
d^2, &~ \text{if} ~x=1,\omega_d,\\
0, &~ \text{if} ~ x=\omega_d^2,\cdots,\omega_d^{d-1}.
\end{cases}
\end{equation}
\end{lemma}

\begin{proof}
Fixing $k=0$ on the left hand of Eq. \eqref{poly-coefficients}, one has
\begin{equation}\label{the-first-term}
\sum_{\substack{i,j=0\\i+j\equiv 0 \;\text{mod}\; d}}^{d-1}\left(1-\omega_d^{d-i}\right)
\left(1-\omega_d^{d-j}\right)=\sum_{\substack{i,j=0\\i+j\equiv 0 \;\text{mod}\; d}}^{d-1}\left[2-\left(\omega_d^{d-i}+\omega_d^{d-j}\right)\right]=2d,
\end{equation}
where the last equality holds because the primitive root of unit satisfies $\sum_{m=0}^{d-1}\omega_d^m=0.$
For each fixed $k\in[1,d-1],$ set $x=\omega_d^s\;(s=0,1,\ldots,d-1)$ on the left hand of Eq. \eqref{poly-coefficients}, one has
\begin{equation}\label{the-latter-term}
\sum_{\substack{i,j=0\\i+j\equiv k \;\text{mod}\; d}}^{d-1}\left[1+\omega_d^{-k}-\left(\omega_d^{d-i}+\omega_d^{d-j}\right)
\right]\omega_d^{sk}=
d\left(1+\omega_d^{-k}\right)\omega_d^{sk}.
\end{equation}
Summing these terms of Eq. \eqref{the-latter-term} for $k\in[1,d-1]$, we have
\begin{equation}\label{sum-latter-term}
\sum_{k=1}^{d-1}\sum_{\substack{i,j=0\\i+j\equiv k \;\text{mod}\; d}}^{d-1}\left[1+\omega_d^{-k}-\left(\omega_d^{d-i}+\omega_d^{d-j}\right)
\right]\omega_d^{sk}=
d\sum_{k=1}^{d-1}\left[\omega_d^{sk}+\omega_d^{(s-1)k}\right]
=
\begin{cases}
d(d-2), ~\text{if} ~s=0,1,\\
-2d, ~\text{if} ~s=2,\cdots,d-1.
\end{cases}
\end{equation}
\end{proof}
Combined Eq. \eqref{the-first-term} with Eq. \eqref{sum-latter-term}, one obtains Eq. \eqref{poly-coefficients}.  \qed

\textbf{The high-dimensional cases:} The generalized Weyl map $\mathcal{E}(p)$ acts on the initial states $\rho_i(0)=|\psi_i\rangle\langle\psi_i|$, i.e., $\rho_i(p)=\mathcal{E}(p)[\rho_i(0)],$ which satisfies
\begin{equation}\label{d-Weyl-c}
\rho_{i}(p)=\left[1-\frac{d-1}{d}\kappa(p)\right]\rho_{i}(0)+
\frac{\kappa(p)}{d}\sum_{i=1}^{d-1}U_{di}\rho_{i}(0)U_{di}^\dag.
\end{equation}
where $\kappa(p)=p\left[1+\alpha(1-\frac{d-1}{d}p)\right],$ and $\rho_{i}(0)=|\psi_i\rangle\langle\psi_i|$ for $i=1,2$. Combined Eq. \eqref{d-Weyl-c} with Eq. \eqref{p-trace-dis}, one can obtain the trace distance $D[\rho_{1}(p),\rho_{2}(p)]$. In the following, for the high-dimensional cases, we divide the selected pair of orthogonal MUBs into two cases for discussions.
\begin{enumerate}
\item [\rm (a)] If $|\psi_i\rangle\in\mathcal{B}_0=\left\{e_j\right\}_{j=0}^{d-1}$, similar to the case of $d=3$, one obtains the trace distance $D[\rho_{1}(p),\rho_{2}(p)]=1$, where the $(j+1)$-th component of the column vector $e_j$ is 1 and the rest are $0$.

\item [\rm (b)] If $|\psi_i\rangle\in\mathcal{B}_k\setminus\mathcal{B}_0$, for instance, we choose the following two $d$-dimensional orthogonal column vectors
$$
|\psi_1\rangle=\frac{1}{\sqrt{d}}(1,1,\cdots,1,1)^T,~ |\psi_2\rangle=\frac{1}{\sqrt{d}}(1,\omega_d,\cdots,\omega_d^{d-2},
\omega_d^{d-1})^T,~\omega_d=e^{\frac{2\pi\mathrm{i}}{d}},
$$
as a pair of initial pure states. Denote the difference between the evolve states $\rho_1(p)$ and $\rho_2(p)$ as  $A=\rho_{1}(p)-\rho_{2}(p)$, after some algebraic calculations, one can obtain the Hermitian matrix
\begin{equation}\label{high-rho-difference}
A=\frac{1-\kappa(p)}{d}
\begin{pmatrix}
0 &\; 1-\omega_d^{d-1} &\; \cdots&\; 1-\omega_d^2 &\;1-\omega_d \\
1-\omega_d &\; 0 &\; \cdots&\; 1-\omega_d^3 &\;1-\omega_d^2 \\
\vdots &\; \vdots &\; \ddots &\; \vdots &\; \vdots \\
1-\omega_d^{d-2} &\; 1-\omega_d^{d-3} &\; \cdots&\; 0 &\;1-\omega_d^{d-1}\\
1-\omega_d^{d-1} &\; 1-\omega_d^{d-2} &\; \cdots&\; 1-\omega_d&\; 0
\end{pmatrix}.
\end{equation}
The form of $A$ is actually a circulant matrix \cite{Hor-2013}, which can be expressed by
\begin{equation}
A=\frac{1-\kappa(p)}{d}\sum_{i=0}^{d-1}\left(1-\omega_d^{d-i}\right)J^i,
\end{equation}
where $J^0=\mathbbm{1}_d$ and
$$
J^i=
\begin{pmatrix}
0 &\;\mathbbm{1}_{d-i}\\
\mathbbm{1}_i &\;0
\end{pmatrix}
$$
are the basic circulant matrices for all $1\leq i\leq d-1$. One can obtain the singular values of Eq. \eqref{high-rho-difference} are $[1-k(p)]^2$ ($2$ multiplicity) and $0$ ($d-2$ multiplicity). Indeed, by the property of circulant matrix, we know that the product of two circulant matrices is still a circulant matrix. That is,
\begin{align}
A^2=&\frac{[1-\kappa(p)]^2}{d^2}
\left[\sum_{i=0}^{d-1}\left(1-\omega_d^{d-i}\right)J^i\right]
\left[\sum_{j=0}^{d-1}\left(1-\omega_d^{d-j}\right)J^j\right] \nonumber\\
=&\frac{[1-\kappa(p)]^2}{d^2}\sum_{i,j=0}^{d-1}\left(1-\omega_d^{d-i}\right)
\left(1-\omega_d^{d-j}\right)J^{(i+j)\; \text{mod} \; d}\nonumber\\
=&\frac{[1-\kappa(p)]^2}{d^2}\sum_{k=0}^{d-1}\sum_{\substack{i,j=0\\i+j\equiv k \;\text{mod}\; d}}^{d-1}\left(1-\omega_d^{d-i}\right)\left(1-\omega_d^{d-j}\right)J^k
\end{align}
is a circulant matrix. Define
$$
f(x)=\sum_{k=0}^{d-1}a_k x^k,
$$
which is a polynomial with degree no larger than $d-1$. Here the coefficients of the polynomial $f(x)$ are given by $$a_k=\frac{[1-\kappa(p)]^2}{d^2}\sum_{\substack{i,j=0\\i+j\equiv k \;\text{mod}\; d}}^{d-1}\left(1-\omega_d^{d-i}\right)\left(1-\omega_d^{d-j}\right)$$ for $0\leq k\leq d-1.$ Hence, we have $A^2=f(J).$ By the property of circulant matrix $A^2$, we know that the eigenvalues of $A^2$ are $f(1), f(\omega_d), f(\omega_d^2), \cdots, f(\omega_d^{d-1})$ \cite{Hor-2013}.
Moreover, by Lemma \ref{lem:1}, one has
$$
f(\omega_d^k)=
\begin{cases}
[1-\kappa(p)]^2,& ~ \text{if}~ k=0,1,\\
0, & ~ \text{if}~ k=2,\cdots,d-1,
\end{cases}
$$

Therefore, we can get the trace distance
\begin{equation}\label{high-trace-distance}
D[\rho_{1}(p),\rho_{2}(p)]=\left|1-\kappa(p)\right|=
\frac{(d-1)\alpha}{d}\left|(p-\alpha_-)(p-\alpha_+)\right|.
\end{equation}
where $\alpha_{\pm}$ are given by Eq. \eqref{two-roots}. In particular, when $\alpha=0$, we have $\kappa(p)=p$ and the trace distance $D[\rho_{1}(p)-\rho_{2}(p)]=1-p.$

For the other known orthogonal initial states, denote the difference between the other pairs of evolve states $\rho_1(p)$ and $\rho_2(p)$ as $A^{\prime}$. There always exists a complex invertible matrix $P\in\mathbb{C}^{d\times d}$, such that $P^{-1}A'^2P=A^2,$ i.e., $A^2$ is similar to $A'^2$. Therefore, $A'^2$ and $A^2$ have the same eigenvalues \cite{Hor-2013}. Then the values of the trace distance $D$ are coincided with Eq. \eqref{high-trace-distance}.
\end{enumerate}

\end{document}